\documentclass[11pt,leqno]{article}

\input{style}  
% \usepackage{refcheck}

%%%%%%%%%%%%%%%%%%%%%%%%%%%%%%%%%%%%%%%%%%%%%%%%%%%%%%%%%%%%%%%%%%
\begin{document}
%%%%%%%%%%%%%%%%%%%%%%%%%%%%%%%%%%%%%%%%%%%%%%%%%%%%%%%%%%%%%%%%%%

\title{On a stochastic differential equation arising in a price impact
  model}
 
\author{P. Bank\\ Technische
  Universit{\"a}t Berlin\\
  Institut f{\"u}r Mathematik\\ Stra{\ss}e des 17. Juni 136, 10623
  Berlin, Germany \\ (bank@math.tu-berlin.de) \and D. Kramkov
  \thanks{This research was supported in part by the Carnegie
    Mellon-Portugal Program and by the Oxford-Man Institute for
    Quantitative Finance at the University of Oxford.}  \\
  Carnegie Mellon University and University of Oxford,\\ Department of
  Mathematical
  Sciences,\\ 5000 Forbes Avenue, Pittsburgh, PA, 15213-3890, US \\
  (kramkov@cmu.edu)} \date{\today}

\maketitle
\begin{abstract}
  We provide sufficient conditions for the existence and uniqueness of
  solutions to a stochastic differential equation which arises in the
  price impact model developed in \cite{BankKram:11a} and
  \cite{BankKram:11b}. These conditions are stated as smoothness and
  boundedness requirements on utility functions or Malliavin
  differentiability of payoffs and endowments.
\end{abstract}
 
\begin{description}
\item[Keywords:] Clark-Ocone formula, large investor, Malliavin
  derivative, Pareto allocation, price impact, Sobolev embedding,
  stochastic differential equation.
\item[JEL Classification:] G11, G12, C61.
\item[AMS Subject Classification (1991):] 90A09, 90A10, 90C26.
\end{description}
  
\section{Introduction}
\label{sec:introduction}

In~\cite{BankKram:11a, BankKram:11b}, we developed a
financial model for a large investor who trades with market makers at
their utility indifference prices. We showed that the evolution of
this system can be described by a nonlinear stochastic differential
equation; see~\eqref{eq:10}.

It is the purpose of this paper to derive conditions for existence and
uniqueness of solutions to this SDE. A special feature of our study is
that the SDE's coefficients are defined only implicitly and, hence,
standard Lipschitz and growth conditions are not easily applicable.
We aim to provide readily verifiable criteria in terms of the model
primitives: the market makers' utility functions and initial
endowments and stocks' dividends.

Our main results, stated in Section~\ref{sec:main-results}, yield such
conditions for locally bounded order flows. Theorem~\ref{th:1} shows
that if the market makers' risk aversions are bounded along with
sufficiently many of their derivatives then there exist unique maximal
local solutions.  Its proof relies on Sobolev's embedding results for
stochastic integrals due to \citet{Szn:81}. For the special case of
exponential utilities Theorem~\ref{th:2} establishes the existence of
a unique global solution.  Theorem~\ref{th:3} proves this under the
alternative assumption that, in a Brownian framework, the market
makers' initial endowments and stocks' dividends are Malliavin
differentiable and risk aversions are bounded along with their first
derivatives. The main tool here is the Clark-Ocone formula for
$\mathbf{D}^{1,1}$ from \citet*{KaratOconeLi:91}.

\subsection*{Some notation.}
\label{sec:some-notation}

We use the conventions and notations of the parent paper
\cite[Section~2]{BankKram:11b}.  In particular, for a metric space
$\mathbf{X}$ we denote by $\mathbf{C}([0,1],\mathbf{X})$ the space of
continuous maps from $[0,1]$ to $\mathbf{X}$.  For nonnegative
integers $m$ and $n$ and an open set $V\subset \mathbf{R}^d$ we denote
by $\mathbf{C}^m = \mathbf{C}^m(V) = \mathbf{C}^m(V, \mathbf{R}^n)$
the Fr\'echet space of $m$-times continuously differentiable functions
$\map{f}{V}{\mathbf{R}^n}$ with the topology generated by the
semi-norms
\begin{displaymath}
  \norm{f}_{m,C} \set \sum_{0\leq \abs{\beta} \leq m} \sup_{x\in C}
  \abs{\partial^\beta f(x)}, 
\end{displaymath}
where $C$ is a compact subset of $V$, $\beta =
(\beta_1,\dots,\beta_d)$ is a multi-index of non-negative integers,
$\abs{\beta} \set \sum_{i=1}^d \beta_i$, and
\begin{displaymath}
  \partial^\beta \set 
  \frac{\partial^{|\beta|}}{\partial x_1^{\beta_1}\dots \partial 
    x_d^{\beta_d}}. 
\end{displaymath}
If $K = (K(x))_{x\in V}$ is a family of stochastic processes $K(x) =
(K_t(x))_{t\in [0,1]}$, then we say that $K$ has values in
$\mathbf{C}^m(V)$ if for every $t\in [0,1]$ the stochastic field $K_t$
on $V$ has sample paths in $\mathbf{C}^m(V)$.

\section{Setup}
\label{sec:setup}

Let $u_m = u_m(x)$, $m=1,\dots,M$, be functions on the real line
$\mathbf{R}$ satisfying

\begin{Assumption}
  \label{as:1}
  Each $u_m$ is strictly concave, strictly increasing, twice
  continuously differentiable,
  \begin{displaymath}
    \lim_{x\to \infty} u_m(x) = 0
  \end{displaymath}
  and for some constant $c>0$ the absolute risk aversion
  \begin{equation}
    \label{eq:1}
    \frac1c \leq a_m(x) \triangleq -\frac{u_m''(x)}{u_m'(x)} \leq c,
    \quad x \in \mathbf{R}.
  \end{equation}
\end{Assumption}

Denote by $r=r(v,x)$ the $v$-weighted $\sup$-convolution:
\begin{equation}
  \label{eq:2}
  r(v,x) \set \max_{x^1+\dots+ x^M = x} \sum_{m=1}^M v^m u_m(x^m),
  \quad (v,x) \in (0,\infty)^M \times \mathbf{R}.
\end{equation}
The main properties of the saddle function $r=r(v,x)$ are collected in
\cite[Section 4.1]{BankKram:11a}. In particular, for $v\in
(0,\infty)^M$, the function $r(v,\cdot)$ has the same properties as
the functions $u_m$, $m=1,\dots,M$, of Assumption~\ref{as:1} and, for
$c>0$ from~\eqref{eq:1},
\begin{align}
  \label{eq:3}
  \frac1c \frac{\partial r}{\partial x}(v,x) \leq -M\frac{\partial^2
    r}{\partial x^2}(v,x) \leq c \frac{\partial r}{\partial x}(v,x),
  \quad x \in \mathbf{R}.
\end{align}
From~\eqref{eq:3} we deduce the exponential growth property
\begin{equation}
  \label{eq:4}
  e^{-y^+c/M+y^-/(cM)} \leq \frac{\frac{\partial r}{\partial
      x}(v,x+y)}{\frac{\partial
      r}{\partial x}(v,x)} \leq e^{-y^+/(cM)+y^-c/M}, \quad x,y \in \mathbf{R}, 
\end{equation}
where for real $x$ we denote by $x^+\set \max(x,0)$ and $x^- \set
(-x)^+$ the positive and negative parts of $x$.  As $r(v,x)\to 0$ when
$x\to \infty$, we also obtain the estimates
\begin{align}
  \label{eq:5}
  -\frac1c r(v,x) \leq M\frac{\partial r}{\partial x}(v,x) \leq -c
  r(v,x), \quad x \in \mathbf{R}.
\end{align}

Let $(\Omega, \mathcal{F}_1, (\mathcal{F}_t)_{0 \leq t \leq 1},
\mathbb{P})$ be a complete filtered probability space satisfying
\begin{Assumption}
  \label{as:2}
  There is a $d$-dimensional Brownian motion $B = (B^i)$ such that
  every local martingale $M$ admits an integral representation
  \begin{displaymath}
    M_t = M_0 + \int_0^t H_u dB_u \set M_0 + \sum_{i=1}^d \int_0^t H^i_u
    dB^i_u, \quad t\in [0,1], 
  \end{displaymath}
  for some predictable process $H = (H^i)$ with values in
  $\mathbf{R}^d$.
\end{Assumption}
\noindent Of course, this assumption holds if the filtration is
generated by $B$.

Let $\Sigma_0$ and $\psi = (\psi^j)_{j=1,\dots,J}$ be random
variables. We denote
\begin{displaymath}
  \Sigma(x,q) \set \Sigma_0 + x+\ip{q}{\psi} \set \Sigma_0  + x +
  \sum_{j=1}^J q^j \psi^j, \quad (x,q) \in \mathbf{R} 
  \times \mathbf{R}^J,
\end{displaymath}
and assume that
\begin{equation}
  \label{eq:6}
  \mathbb{E}[r(v,\Sigma(x,q))] > -\infty, \quad 
  (v,x,q) \in \mathbf{A} \set (0,\infty)^M \times \mathbf{R} 
  \times \mathbf{R}^J.
\end{equation}
From~\eqref{eq:4} and~\eqref{eq:5} we deduce that this integrability
condition holds if
\begin{equation}
  \label{eq:7}
  \mathbb{E}[e^{p\abs{\psi} + c\Sigma_0^-/M}] < \infty, \quad p>0.
\end{equation}

Under Assumptions~\ref{as:1} and \ref{as:2} and the integrability
condition~\eqref{eq:6} the stochastic fields
\begin{align*}
  F_t(v,x,q) &\set \mathbb{E}[r(v,\Sigma(x,q))|\mathcal{F}_t], \quad
  (v,x,q)\in \mathbf{A}, \\
  \begin{split}
    G_t(u,y,q) &\set \sup_{v\in (0,\infty)^M}\inf_{x\in
      \mathbf{R}}[\ip{v}{u} +
    xy  - F_t(v,x,q)], \\
    &\quad (u,y,q) \in \mathbf{B}\set (-\infty,0)^M \times (0,\infty)
    \times \mathbf{R}^J,
  \end{split}
\end{align*}
are well-defined, have sample paths in
$\mathbf{C}([0,1],\mathbf{C}^2(\mathbf{A}))$ and
$\mathbf{C}([0,1],\mathbf{C}^2(\mathbf{B}))$, respectively, and for a
multi-index $\beta = (\beta_1,\dots,\beta_{M+1+J})$ with
$\abs{\beta}\leq 2$
\begin{displaymath}
  \partial^{\beta} F_t(v,x,q) =
  \mathbb{E}[\partial^\beta r(v,\Sigma(x,q))|\mathcal{F}_t];
\end{displaymath}
see Theorems~4.3 and~4.4 and Corollary~5.4 in \cite{BankKram:11b}.  In
view of Assumption~\ref{as:2} the martingales $\partial^\beta F$,
$\abs{\beta}\leq 2$, admit integral representations:
\begin{equation}
  \label{eq:8}
  \partial^{\beta} F_t(v,x,q) =  \mathbb{E}[\partial^\beta
  r(v,\Sigma(x,q))] + \int_0^t 
  \partial^{\beta} H_s(v,x,q)dB_s 
\end{equation}
for some $d$-dimensional predictable processes $\partial^{\beta}
H(v,x,q)$, where the notation $\partial^{\beta} H$ can be justified
using the concept of $\mathcal{L}$-derivatives from \cite[Section
2.7]{Krylov:80}. Note that under the conditions of our main
Theorems~\ref{th:1}--\ref{th:3} the derivatives
$\partial^\beta H_t$ are well-defined in the usual pointwise sense and
have continuous sample paths on $\mathbf{A}$; see Lemmas~\ref{lem:10}
and~\ref{lem:13}.

As part of the construction of our SDE~\eqref{eq:10} we assume that
the process $\frac{\partial H}{\partial v}$ has values in
$\mathbf{C}(\mathbf{A},\mathbf{R}^{d})$ (equivalently, that the
stochastic fields $\frac{\partial H^i_t}{\partial v^m}$ on
$\mathbf{A}$ are continuous) and, for $u \in (-\infty,0)^M$ and $q \in
\mathbf{R}^J$, define an $M\times d$-dimensional process $K(u,q)$ by
\begin{equation}
  \label{eq:9}
  K^{mi}_t(u,q) \set \frac{\partial H^i_t}{\partial
    v^m}\left(\frac{\partial G_t}{\partial 
      u}(u,1,q),G_t(u,1,q),q\right), \quad t\in [0,1].   
\end{equation}
The paper is concerned with the existence and uniqueness of a (strong)
solution $U = (U^m)_{m=1,\dots,M}$ with values in $(-\infty,0)^M$ to
the stochastic differential equation
\begin{equation}
  \label{eq:10}
  U_t = U_0 + \int_0^t K_s(U_s,Q_s) dB_s, \quad U_0 \in
  (-\infty,0)^M, 
\end{equation}
parameterized by a predictable process $Q$ with values in
$\mathbf{R}^J$.

This equation arises in the price impact model of \cite{BankKram:11b},
where it describes the evolution of the expected utilities $U = (U^m)$
of $M$ market makers who collectively acquire $Q = (Q^j)$ stocks from
a ``large'' investor. The functions $u_m = u_m(x)$, $m=1,\dots,M$,
specify the market makers' utilities for terminal wealth and
$\Sigma_0$ stands for their total initial random endowment.  The
cumulative dividends paid by the stocks are given by $\psi =
(\psi^j)$. According to \cite[Theorem 5.8]{BankKram:11b} a predictable
process $Q = (Q^j)$ is a (well-defined) investment strategy for the
large investor if and only if~\eqref{eq:10} has a unique (global)
solution $U$. In this case, the total cash amount received by the
market makers (and paid by the investor) up to time $t$ is given by
$G_t(U_t,1,Q_t)$.

Of course, it is easy to state standard conditions on the stochastic
field $K$ guaranteeing the existence and uniqueness of a solution $U$
to~\eqref{eq:10}; see Lemma~\ref{lem:2} below. However, such criteria
have little practical value, because, except in special cases such as
Example 5.9 in \cite{BankKram:11b}, an explicit expression for $K$ is
not available. Instead, we look for easily verifiable conditions in
terms of the model primitives: the functions $(u_m)$ and the random
variables $\Sigma_0$ and $\psi=(\psi^j)$.

\section{Main results}
\label{sec:main-results}

Let $Q$ be a predictable process with values in $\mathbf{R}^J$ and
$\tau$ be a stopping time with values in
$(0,1]\cup\braces{\infty}$. We remind the reader that an adapted
process $U$ with values in $(-\infty,0)^M$ defined on $[0,\tau)\cap
[0,1]$ is called a \emph{maximal local solution} to~\eqref{eq:10} with
the \emph{explosion time} $\tau$ if for every stopping time $\sigma$
with values in $[0,\tau)\cap [0,1]$ the process $U$
satisfies~\eqref{eq:10} on $[0,\sigma]$ and if in addition
\begin{displaymath}
  \lim_{t\uparrow \tau} \max_{m=1,\dots,M}U^m_t = 0 \mtext{on} \{\tau <
  \infty\}.  
\end{displaymath}
Note that, since a negative local martingale is a submartingale,
$\lim_{t\uparrow \tau} U^m_t$ exists and is finite.

Recall the notation $a_m = a_m(x)$ from \eqref{eq:1} for the absolute
risk-aversion of $u_m = u_m(x)$.

\begin{Theorem}
  \label{th:1}
  Let Assumptions~\ref{as:1} and~\ref{as:2} and condition~\eqref{eq:6}
  hold.  Denote by $l$ the smallest integer such that
  \begin{displaymath}
    l > \frac{M+J}{2},
  \end{displaymath}
  and suppose $u_m\in \mathbf{C}^{l+2}$ with
  \begin{equation}
    \label{eq:11}
    \sup_{x\in
      \mathbf{R}}\abs{a^{(k)}_m(x)} < \infty, \quad k=1,\dots,l, \quad
    m=1,\dots,M.   
  \end{equation}
  Then for every locally bounded predictable process $Q$ with values
  in $\mathbf{R}^J$ there is a unique maximal local solution
  to~\eqref{eq:10}.
\end{Theorem}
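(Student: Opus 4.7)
The strategy is to reduce Theorem~\ref{th:1} to the classical existence and uniqueness theorem for stochastic differential equations with locally Lipschitz coefficients. The central task is therefore to show that under the stated hypotheses the coefficient $(u,q)\mapsto K_t(u,q)$ from~\eqref{eq:9} admits a modification that is jointly continuous in $(\omega,t,u,q)$ and locally Lipschitz in $(u,q)$ uniformly in $(\omega,t)$. Granted this, the maximal local solution is built by a routine exhaustion: pick compact sets $D_n\uparrow(-\infty,0)^M$ and stopping times $\sigma_n\uparrow 1$ on which $Q$ is bounded, solve~\eqref{eq:10} uniquely on $[0,\sigma_n]$ up to the first exit $\tau_n$ of $U$ from $D_n$, and set $\tau\set\lim_n\tau_n$. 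Uniqueness propagates across the localization, and on $\{\tau<\infty\}$ the constraint that $U$ must leave every $D_n$ forces $\max_m U^m_t\to 0$, since each negative local martingale component has a finite limit at $\tau$.

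The regularity of $K$ is obtained in two steps. In the first, I would apply the Sobolev-type embedding for parametric stochastic integrals due to~\citet{Szn:81} to the martingales~\eqref{eq:8}. The hypothesis $u_m\in\mathbf{C}^{l+2}$ combined with~\eqref{eq:11} propagates through the sup-convolution~\eqref{eq:2} to yield $\mathbf{C}^{l+2}$-regularity of $r$ in $(v,x)$ with tractable bounds on its mixed derivatives; differentiating~\eqref{eq:8} under the conditional expectation produces the appropriate $L^p$-estimates for $\partial^\beta H(v,x,q)$ and its parametric derivatives. The condition $l>(M+J)/2$ is precisely the Sobolev dimension threshold under which Sznitman's embedding upgrades these $L^p$-estimates to a jointly continuous, and in fact $\mathbf{C}^1$, modification. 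The effective parameter dimension is $M+J$, matching the variables $(u,q)$ on which $K$ depends after the $x$-variable is eliminated by the duality defining $G_t$. In the second step, the identities underlying~\eqref{eq:9} characterize $(\partial G_t/\partial u,G_t)$ as the implicit inverse of the map $(v,x)\mapsto(\partial F_t/\partial v,\partial F_t/\partial x)$, whose Jacobian is nondegenerate by the strict concavity of $r(v,\cdot)$ (cf.~\eqref{eq:3}). The implicit function theorem, applied pointwise in $(\omega,t)$, transfers the $\mathbf{C}^1$-regularity of $\partial H/\partial v$ in $(v,x,q)$ into the required $\mathbf{C}^1$-regularity of $K$ in $(u,q)$.

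The main obstacle is the verification of the moment estimates required by Sznitman's theorem, namely uniform-on-compacts $L^p$-bounds on $\partial^\alpha r(v,\Sigma(x,q))$ for all mixed multi-indices $\alpha$ of order up to $l+1$. Expanding by the Fa\`a di Bruno formula and the standard identities that express higher $x$-derivatives of $r$ in terms of $\partial r/\partial x$ modulated by the risk-aversion coefficient of $r(v,\cdot)$ and its derivatives (which are themselves polynomially controlled by the $a_m^{(k)}$ via the sup-convolution structure) reduces the problem to estimating $\mathbb{E}[|\partial r/\partial x(v,\Sigma(x,q))|^p P(|\psi|,|a_m^{(k)}|)]$ for polynomials $P$. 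The bounds~\eqref{eq:11} control the $a_m^{(k)}$-contributions, while the exponential estimate~\eqref{eq:4}, the growth bound~\eqref{eq:5} and the integrability hypothesis~\eqref{eq:6} (invoked at varying parameter values to extract polynomial and exponential moments of $\psi$ in the spirit of~\eqref{eq:7}) handle the remainder. Once these estimates are secured, Sznitman's theorem yields the $\mathbf{C}^1$ modification of the fields $\partial^\beta H$, and the rest of the argument reduces to routine SDE bookkeeping along the lines of Lemma~\ref{lem:2}.
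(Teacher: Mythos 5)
Your overall strategy coincides with the paper's: apply Sznitman's embedding to the parametric martingales~\eqref{eq:8} to obtain a $\mathbf{C}^2$-modification of $H$ with $\int_0^1\norm{H_t}^2_{2,C}\,dt<\infty$, transfer this regularity to $K$ through the conjugacy relations~\eqref{eq:20}--\eqref{eq:22}, and conclude by a Kunita-type local existence theorem (Lemmas~\ref{lem:1} and~\ref{lem:3}). However, there is a genuine gap in your dimension count, which is exactly the step that makes the threshold $l>(M+J)/2$ work. You justify the effective parameter dimension $M+J$ by saying that ``the $x$-variable is eliminated by the duality defining $G_t$.'' This cannot be how Sznitman's theorem is invoked: his Proposition~1 applies to conditional expectations $\E[\eta(a)|\cF_t]$ of a \emph{fixed} random field, so it must be applied to $\xi(a)=r(v,\Sigma(x,q))$ on its natural parameter domain $\mathbf{A}=(0,\infty)^M\times\mathbf{R}\times\mathbf{R}^J$, of dimension $M+1+J$, \emph{before} composing with the random, time-dependent maps $(u,q)\mapsto(\frac{\partial G_t}{\partial u}(u,1,q),G_t(u,1,q))$; neither $G_t(u,1,q)$ nor $K_t(u,q)$ is a martingale in $t$, so there is no version of the embedding available directly in the $(u,q)$-variables. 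With the naive dimension $M+1+J$ and the requirement $j=2$ (you do need $\partial H/\partial v\in\mathbf{C}^1$, hence $H\in\mathbf{C}^2$), Sznitman's condition reads $l+2-(M+1+J)/2>2$, i.e.\ $l>(M+J+1)/2$, which for $M+J$ odd demands one more derivative than the theorem asserts. The missing idea is the positive homogeneity $H(bv,x,q)=bH(v,x,q)$ (Remark~\ref{rem:1}), which lets one restrict $v$ to the $(M-1)$-dimensional simplex $\mathbf{S}^M$ and thereby reduces the parameter dimension to $M+J$, recovering exactly the stated threshold.

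Two smaller points. First, you ask for $K$ to be locally Lipschitz ``uniformly in $(\omega,t)$''; this is stronger than what is available or needed --- the correct hypothesis, as in Lemma~\ref{lem:1}, is the time-integrated bound $\int_0^1\norm{K_t(\cdot,Q_t)}^2_{1,C}\,dt<\infty$, and your exhaustion argument goes through with it. Second, your verification of the moment hypotheses via Fa\`a di Bruno and separate extraction of moments of $\psi$ risks requiring more than~\eqref{eq:6} (which is all Theorem~\ref{th:1} assumes); the cleaner route, used in Lemmas~\ref{lem:8} and~\ref{lem:9}, is the self-domination $\abs{\mathbb{T}^\beta r}\leq b\,\frac{\partial r}{\partial x}$ for $\abs{\beta}\leq l+2$ together with~\eqref{eq:4}, which bounds $\norm{\xi}_{l+2,C}$ by $b\norm{\xi}_D$ on a slightly larger compact $D$ (the polynomial factor in $\abs{\psi}$ being absorbed by enlarging the $q$-range), so that integrability follows from~\eqref{eq:6} alone.
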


Lemma~\ref{lem:7} contains equivalent reformulations
of~\eqref{eq:11}. The proof of Theorem~\ref{th:1} as well as of
Theorems~\ref{th:2} and~\ref{th:3} below will be given in
Section~\ref{sec:proofs}.

Clearly, the conditions of Theorem~\ref{th:1} are satisfied for the
exponential utilities:
\begin{equation}
  \label{eq:12}
  u_m(x) = -\frac1{a_m} e^{-a_m x}, \quad x\in \mathbf{R}, \quad
  m=1,\dots, M,  
\end{equation}
where $a_m$ is a positive number.  Direct computations show that in
this case
\begin{equation}
  \label{eq:13}
  r(v,x) = -\frac1{a} e^{-ax} \prod_{m=1}^M \left.v_m\right.^{\frac{a}{a_m}}, \quad
  (v,x)\in (0,\infty)^M\times \mathbf{R},  
\end{equation}
where the constant $a>0$ is given by
\begin{equation}
  \label{eq:14}
  \frac1a = \sum_{m=1}^M \frac1{a_m}.
\end{equation}
The integrability condition~\eqref{eq:6} takes now the form:
\begin{equation}
  \label{eq:15}
  \mathbb{E}[e^{-a\Sigma_0 + p\abs{\psi}}] < \infty, \quad p>0. 
\end{equation}
In fact, for exponential utilities we have a stronger (global) result.

\begin{Theorem}
  \label{th:2}
  Let Assumption~\ref{as:2} and conditions~\eqref{eq:12}
  and~\eqref{eq:15} hold. Then for every locally bounded and
  predictable process $Q$ with values in $\mathbf{R}^J$ there is a
  unique solution to~\eqref{eq:10}.
\end{Theorem}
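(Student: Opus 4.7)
The plan is to exploit the multiplicative structure of the exponential utilities: under \eqref{eq:12} the field $F$ factorizes, $K(u,q)$ becomes linear in $u$, and \eqref{eq:10} decouples into $M$ scalar linear SDEs whose stochastic-exponential solutions are inherently globally defined.

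From \eqref{eq:13} and the affine form of $\Sigma(x,q)=\Sigma_0+x+\ip{q}{\psi}$,
\[
F_t(v,x,q) = \phi(v,x)\,N_t(q), \quad \phi(v,x):=-\frac{1}{a}e^{-ax}\prod_{m=1}^M v_m^{a/a_m}, \quad N_t(q):=\E\bigl[e^{-a(\Sigma_0+\ip{q}{\psi})}\bigm|\cF_t\bigr],
\]
which is well defined and strictly positive by \eqref{eq:15}. Uniqueness of the integral representation in Assumption~\ref{as:2} transfers the same factorization to $H$, i.e.\ $H_t(v,x,q)=\phi(v,x)\,\eta_t(q)$, where $\eta(q)$ is the integrand for $N(q)$. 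A direct calculation of the inf-sup defining $G_t$ identifies $v^{*,m}:=\partial G_t/\partial u^m(u,1,q) = -1/(a_m u^m)$ and $x^*:=G_t(u,1,q)$; since $G_t(u,\cdot,q)$ is positively homogeneous, $(v^*,x^*)$ is the Fenchel saddle point, hence $F_t(v^*,x^*,q)=\ip{v^*}{u}=-1/a$ and therefore $\phi(v^*,x^*)=-1/(aN_t(q))$. Substituting into $\partial_{v^m} H^i = (a/a_m)\,\phi\,\eta^i/v^m$ yields, after cancellation,
\[
K^{mi}_t(u,q) = u^m\,\lambda^i_t(q), \qquad \lambda^i_t(q):=\eta^i_t(q)/N_t(q),
\]
so $K$ is linear in $u$.

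Consequently \eqref{eq:10} decouples into the scalar linear SDEs $dU^m_t = U^m_t\,\lambda_t(Q_t)\,dB_t$, whose explicit candidate solutions
\[
U^m_t = U^m_0\,\cE\!\left(\int_0^\cdot \lambda_s(Q_s)\,dB_s\right)_t
\]
lie in $(-\infty,0)$ because $U^m_0<0$ and the stochastic exponential is strictly positive whenever it is well defined. Itô's formula then verifies that these processes satisfy \eqref{eq:10}, and pathwise uniqueness for scalar linear SDEs (applied to the difference of two solutions, which satisfies the same linear SDE with vanishing initial condition) delivers uniqueness.

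The main obstacle is the global existence of the stochastic exponential, i.e.\ the a.s.\ finiteness of $\int_0^1|\lambda_s(Q_s)|^2\,ds$. Using local boundedness of $Q$, pick stopping times $\sigma_n\uparrow 1$ with $|Q|\le n$ on $[0,\sigma_n]$; it suffices to show the uniform bound $\sup_{|q|\le n}\int_0^1|\lambda_t(q)|^2\,dt<\infty$ $\P$-a.s. Itô's formula applied to $\ln N_t(q)$, combined with $\ln N_1(q)=-a(\Sigma_0+\ip{q}{\psi})$, yields the identity
\[
\frac{1}{2}\int_0^1|\lambda_t(q)|^2\,dt = \ln N_0(q)+\int_0^1\lambda_t(q)\,dB_t+a(\Sigma_0+\ip{q}{\psi}),
\]
whose right-hand side is controlled uniformly on $\{|q|\le n\}$ by combining the exponential integrability \eqref{eq:15} with the joint $(t,q)$-regularity of $N(q)$ and of the stochastic integral $\int_0^\cdot \lambda_s(q)\,dB_s$ (supplied by the differentiability framework already underlying \eqref{eq:8}). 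This uniform a.s.\ estimate is the essential technical step; everything else is a direct consequence of the explicit form \eqref{eq:13} together with elementary duality.
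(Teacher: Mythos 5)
Your key structural observation is correct: with exponential utilities one has $F_t(v,x,q)=r(v,x)\widetilde F_t(q)$ and $H_t(v,x,q)=r(v,x)\widetilde H_t(q)$, where $\widetilde F_t(q)=\mathbb{E}[e^{-a(\Sigma_0+\ip{q}{\psi})}|\mathcal{F}_t]=\widetilde F_0(q)+\int_0^t\widetilde H_s(q)\,dB_s$, and the conjugacy relation~\eqref{eq:22} (which gives $u^m=\partial F_t/\partial v^m$ at the saddle point directly, without your detour through Euler's identity and the value $-1/a$) yields $K^{mi}_t(u,q)=u^m\,\widetilde H^i_t(q)/\widetilde F_t(q)$. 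This is exactly the reduction the paper performs implicitly: its Lemma~\ref{lem:2} substitutes $U^m=-e^{Z^m}$, and in the exponential case the resulting coefficient $A^m_t(z)$ is the $z$-independent process $\lambda_t(Q_t)=\widetilde H_t(Q_t)/\widetilde F_t(Q_t)$, so in both arguments global solvability and uniqueness reduce to the a.s.\ finiteness of $\int_0^1\sup_{\abs{q}\le n}\abs{\lambda_t(q)}^2\,dt$.

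It is at precisely this step that your argument has a gap. The identity obtained from It\^o's formula for $\ln\widetilde F_t(q)$ does not help: to bound its right-hand side uniformly over $\{\abs{q}\le n\}$ you must control $\sup_{\abs{q}\le n}\abs{\int_0^1\lambda_t(q)\,dB_t}$, and this presupposes a version of the random field $q\mapsto(\widetilde H_t(q))_{t}$ with enough joint regularity for such a supremum to be meaningful and finite. That regularity is not ``supplied by the differentiability framework already underlying~\eqref{eq:8}'' --- \eqref{eq:8} only provides $\mathcal{L}$-derivatives --- and producing it is exactly the content of the Sobolev-embedding result for stochastic integrals (Sznitman) that the paper isolates as Lemma~\ref{lem:6} and applies in Lemma~\ref{lem:10}; so your route is circular in the sense that the asserted control is no easier than the bound it is meant to deliver. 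Moreover, once that ingredient is available the It\^o identity is superfluous: one bounds directly
\[
\int_0^1\sup_{\abs{q}\le n}\abs{\lambda_t(q)}^2\,dt\le\Big(\inf_{t\in[0,1]}\inf_{\abs{q}\le n}\widetilde F_t(q)\Big)^{-2}\int_0^1\sup_{\abs{q}\le n}\abs{\widetilde H_t(q)}^2\,dt,
\]
where the first factor is finite because $\widetilde F_t(q)\ge\mathbb{E}[e^{-a(\Sigma_0+n\abs{\psi})}|\mathcal{F}_t]$ is a strictly positive continuous martingale, hence bounded away from zero on $[0,1]$, and the second is finite by Lemma~\ref{lem:10}, whose hypotheses hold here since exponential utilities satisfy~\eqref{eq:11}. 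This is the paper's proof (packaged through the growth criterion of Lemma~\ref{lem:4}); your write-up needs this uniform regularity statement to close the ``essential technical step'' you yourself flag.
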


Our final result provides conditions for the existence and uniqueness
of a solution of~\eqref{eq:10} in terms of the Malliavin derivatives
of $\Sigma_0$ and $\psi = (\psi^m)$.  We refer the reader to
\cite{Nualart:06} for an introduction to the Malliavin Calculus and
the notation used in the sequel. For $p\geq 1$ we denote by
$\mathbf{D}^{1,p}$ the Banach space of random variables $\xi$ with
Malliavin derivative $D\xi = (D_t\xi)_{t\in [0,1]}$ and the norm:
\begin{displaymath}
  \norm{\xi}_{\mathbf{D}^{1,p}} \set \left(\mathbb{E}[\abs{\xi}^p] +
    \mathbb{E}\left[\left(\int_0^1 \abs{D_t\xi}^2
        dt\right)^{p/2}\right]\right)^{1/p}.   
\end{displaymath}

\begin{Theorem}
  \label{th:3}
  In addition to Assumption~\ref{as:1} suppose $u_m\in \mathbf{C}^3$
  and
  \begin{equation}
    \label{eq:16} 
    \sup_{x\in \mathbf{R}}\abs{a'_m(x)} < \infty, \quad m=1,\dots,M. 
  \end{equation}
  Assume also that the filtration is generated by a $d$-dimensional
  Brownian motion $B=(B^i)$ and that the random variables $\Sigma_0$
  and $\psi=(\psi^j)$ belong to $\mathbf{D}^{1,2}$ and satisfy the
  integrability condition
  \begin{equation}
    \label{eq:17}
    \mathbb{E}[e^{ p\abs{\psi} + 2c\Sigma_0^{-}/M}] < \infty,  \quad \quad p>0, 
  \end{equation}
  with the constant $c>0$ from~\eqref{eq:1}.  Then for every locally
  bounded and predictable process $Q$ with values in $\mathbf{R}^J$
  there is a unique solution to~\eqref{eq:10}.
\end{Theorem}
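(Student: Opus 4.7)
The plan is to combine the Clark-Ocone formula for $\mathbf{D}^{1,1}$ of \cite{KaratOconeLi:91} with the exponential bounds~\eqref{eq:4}-\eqref{eq:5} to obtain both local Lipschitz regularity of the coefficient $K$ and an a priori estimate preventing explosion of the SDE.

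First, I would verify that for each fixed $(v,x,q)\in \mathbf{A}$ the random variable $\partial_{v^m} r(v,\Sigma(x,q))$ belongs to $\mathbf{D}^{1,1}$. By the envelope property $\partial_{v^m} r(v,\Sigma) = u_m(x^m(v,\Sigma))$ where $x^m$ is the optimal allocation in~\eqref{eq:2}, and the Malliavin chain rule yields
\[
D_t \partial_{v^m} r(v,\Sigma(x,q)) = u_m'(x^m)\,\frac{\partial x^m}{\partial x}(v,\Sigma(x,q))\,\bigl(D_t\Sigma_0 + \langle q, D_t\psi\rangle\bigr).
\]
The first-order condition $v^m u_m'(x^m) = \partial_x r$ together with~\eqref{eq:4} and the bound on $a'_m$ from~\eqref{eq:16} controls the magnitude of the right-hand side, and~\eqref{eq:17} together with $\Sigma_0,\psi\in \mathbf{D}^{1,2}$ makes this $L^1$-integrable. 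The Clark-Ocone formula then gives
\[
\partial_{v^m} H^i_t(v,x,q) = \mathbb{E}\bigl[D_t^i \partial_{v^m} r(v,\Sigma(x,q))\,\bigm|\,\mathcal{F}_t\bigr].
\]
The same argument applied to $\partial^\beta r$ with $|\beta|\leq 1$ yields joint continuity of $\partial^\beta H$ in $(v,x,q)$, and via the $\mathbf{C}^1$ dependence of $u\mapsto(\partial G/\partial u(u,1,q),G(u,1,q))$ inherited from~\cite{BankKram:11b}, local Lipschitz regularity of $K$ in $(u,q)$ on $(-\infty,0)^M\times\mathbf{R}^J$.

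Second, the local Lipschitz regularity of $K$ produces a unique maximal local solution $U$ with some explosion time $\tau\in (0,1]\cup\{\infty\}$; see Lemma~\ref{lem:2}. Since each $U^m$ is a negative local martingale and hence a submartingale, $\lim_{t\uparrow\tau} U^m_t$ exists in $[-\infty,0]$. To conclude that $\tau=\infty$ almost surely, I would plug the Clark-Ocone representation from Step~1 into~\eqref{eq:9}, evaluated at the state-dependent argument, to bound $|K_s(U_s,Q_s)|$ by a conditional expectation of $|\partial_x r| \cdot (|D_s\Sigma_0|+|Q_s||D_s\psi|)$ at the relevant point. The exponential estimates~\eqref{eq:4} and~\eqref{eq:5} then allow $\partial_x r$ to be replaced by a constant multiple of $-r$, and hence $K$ to be controlled pathwise by $|U|$ times a process that is square-integrable under~\eqref{eq:17} on compact sets of $Q$. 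A Gronwall estimate on $\mathbb{E}[|U_{t\wedge\tau_n}|^2]$ along a localizing sequence $\tau_n\uparrow \tau$ then prevents $\max_m U^m_t$ from reaching~$0$ before time~$1$.

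The principal obstacle is the second step: translating the uniform-in-$(v,x,q)$ Clark-Ocone bound on $\partial_{v^m} H$ into an a priori estimate along the state-dependent path $(U_t,Q_t)$ that does not degenerate at the boundary of $(-\infty,0)^M$. The strengthened integrability~\eqref{eq:17} (with exponent $2c/M$ rather than the $c/M$ of~\eqref{eq:7}) provides precisely the margin needed to pass from the $L^1$-based Clark-Ocone formula to a squared $L^2$-estimate along the solution, playing the same role in the Malliavin framework that the explicit exponential formulas~\eqref{eq:13}-\eqref{eq:14} play in the proof of Theorem~\ref{th:2}.
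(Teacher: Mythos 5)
Your first step is essentially the paper's (Lemmas~\ref{lem:12} and~\ref{lem:13}): one shows that $\xi(a)=r(v,\Sigma(x,q))$ belongs to $\mathbf{D}^{1,1}$, applies the Clark--Ocone formula of \cite{KaratOconeLi:91} to identify $H_t(a)=\mathbb{E}[D_t\xi(a)\,|\,\mathcal{F}_t]$, and then differentiates this representation in the parameter $a$. Two technical points you pass over: the Malliavin chain rule requires a truncation argument (the paper composes $r$ with cut-off functions $f_n$, since $\partial r/\partial x$ is unbounded as $x\to-\infty$), and you only control $\partial^\beta H$ for $\abs{\beta}\leq 1$, whereas $K$ in~\eqref{eq:9} is built from $\partial H/\partial v^m$ composed with $(\partial G_t/\partial u,G_t)$, so the local Lipschitz regularity of $K$ needed for the maximal local solution requires $H\in\mathbf{C}^2$, i.e.\ $\abs{\beta}\leq 2$ as in~\eqref{eq:34}. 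These are repairable.

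The genuine gap is your second step. The maximal local solution of~\eqref{eq:10} does not fail by $\abs{U}\to\infty$: each $-U^m$ is a nonnegative local martingale, hence a supermartingale, so $\lim_{t\uparrow\tau}U^m_t$ always exists and is finite, and the only failure mode is $\max_{m}U^m_t\to 0$, that is, $U$ reaching the boundary of the open state space $(-\infty,0)^M$. A Gronwall bound on $\mathbb{E}[\abs{U_{t\wedge\tau_n}}^2]$ is powerless against this, and could not be run in any case because the dominating process $\kappa$ in your estimate $\abs{K_s(U_s,Q_s)}\lesssim \abs{U_s}\,\kappa_s$ satisfies $\int_0^1\kappa_t^2\,dt<\infty$ only pathwise, not in $L^1(\mathbb{P})$. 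What is needed is a Lyapunov function that blows up at $u^m=0$: the paper substitutes $U^m=-\exp(Z^m)$ (Lemma~\ref{lem:2}), under which the diffusion coefficient becomes $K^m_t/U^m_t$ and the dangerous boundary is sent to $z^m=-\infty$, and then invokes Kunita's non-explosion criterion $\int_0^1\sup_{z}\abs{A_t(z)}^2/(1+\abs{z})\,dt<\infty$. Your structural bound on $K^m/u^m$ --- obtained from $-v^m\partial r/\partial v^m\asymp \partial r/\partial x$, the estimate~\eqref{eq:4}, Cauchy--Schwarz, and the doubled exponent in~\eqref{eq:17} --- is exactly the right input (it is Lemma~\ref{lem:14} of the paper), but it must be fed into this logarithmic criterion via Lemmas~\ref{lem:2}, \ref{lem:4} and~\ref{lem:5}, not into a second-moment Gronwall estimate.
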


For a corollary to Theorem~\ref{th:3} we consider the case where
$\Sigma_0$ and $\psi$ are defined in terms of the solution $X$ to the
stochastic differential equation:
\begin{equation}
  \label{eq:18}
  X_t = X_0 + \int_0^t \mu(s,X_s)ds + \int_0^t \sigma(s,X_s)dB_s,
  \quad X_0 \in \mathbf{R}^N.    
\end{equation}
We assume that the functions $\map{\mu}{[0,1]\times
  \mathbf{R}^N}{\mathbf{R}^N}$ and
$\map{\sigma}{[0,1]\times\mathbf{R}^N}{\mathbf{R}^{N \times d}}$ are
Lipschitz-continuous with respect to $x$ and bounded, i.e., there is a
constant $k>0$ such that for all $x,y\in \mathbf{R}^N$ and $t\in
[0,1]$
\begin{equation}
  \label{eq:19}
  \begin{split}
    \abs{\sigma(t,x) - \sigma(t,y)} + \abs{\mu(t,x) - \mu(t,y)} &\leq
    k\abs{x-y}, \\
    \abs{\sigma(t,x)} + \abs{\mu(t,x)} &\leq k.
  \end{split}
\end{equation}

\begin{Corollary}
  In addition to Assumption~\ref{as:1} suppose $u_m\in \mathbf{C}^3$
  and~\eqref{eq:16} holds.  Assume also that the filtration is
  generated by a $d$-dimensional Brownian motion $B=(B^i)$ and that
  the random variables $\Sigma_0$ and $\psi=(\psi^j)$ are of the form
  \begin{displaymath}
    \Sigma_0 = g(X_1), \quad \psi^j=f^j(X_1), \quad j=1,\dots,J, 
  \end{displaymath}
  for some Lipschitz-continuous functions $g$ and $f=(f^j)$ on
  $\mathbf{R}^N$ and for the solution $X$ to the stochastic
  differential equation~\eqref{eq:18} with coefficients
  satisfying~\eqref{eq:19}.  Then for every locally bounded
  predictable process $Q$ with values in $\mathbf{R}^J$ there is a
  unique solution to~\eqref{eq:10}.
\end{Corollary}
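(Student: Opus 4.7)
The plan is simply to verify the hypotheses of Theorem~\ref{th:3} for $\Sigma_0=g(X_1)$ and $\psi^j=f^j(X_1)$; everything else has already been assumed. Two things must be checked: (i) the random variables $\Sigma_0$ and $\psi^j$ belong to $\mathbf{D}^{1,2}$; and (ii) the exponential integrability condition~\eqref{eq:17} holds. No additional assumption on $Q$ is required, since Theorem~\ref{th:3} already accommodates any locally bounded predictable process.

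For (i), under the Lipschitz-continuity and boundedness conditions~\eqref{eq:19}, the standard Malliavin calculus theory for SDEs (see \cite{Nualart:06}, Theorem~2.2.1 and the subsequent moment estimates) yields $X_1\in \mathbf{D}^{1,p}$ for every $p\geq 2$, with a Malliavin derivative $DX_1$ whose $L^p(\Omega;L^2([0,1]))$-norm is finite. Since $g$ and $f^j$ are Lipschitz on $\mathbf{R}^N$, the chain rule for Malliavin derivatives of Lipschitz compositions (see \cite{Nualart:06}, Proposition~1.2.4) gives $g(X_1),f^j(X_1)\in \mathbf{D}^{1,p}$ for all $p\geq 2$, with Malliavin derivatives bounded pointwise by a constant multiple of $|DX_1|$. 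In particular, they lie in $\mathbf{D}^{1,2}$ as required.

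For (ii), the Lipschitz assumptions on $g$ and $f$ give constants such that $|\psi|\leq C(1+|X_1|)$ and $\Sigma_0^-\leq |g(X_1)|\leq C(1+|X_1|)$, so it suffices to prove $\mathbb{E}[\exp(q|X_1|)]<\infty$ for every $q>0$. From~\eqref{eq:18} and~\eqref{eq:19} the drift term $\int_0^1 \mu(s,X_s)\,ds$ is bounded by $c$, while the martingale part $M_1\set\int_0^1 \sigma(s,X_s)\,dB_s$ has quadratic variation at most $c^2$. The exponential supermartingale estimate then yields
\begin{displaymath}
  \mathbb{E}[\exp(\lambda M_1)] \leq \exp(\lambda^2 c^2/2), \quad \lambda \in \mathbf{R},
\end{displaymath}
so $M_1$ is sub-Gaussian; hence $\mathbb{E}[\exp(q|X_1|)]<\infty$ for every $q>0$, and condition~\eqref{eq:17} follows.

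The only potentially delicate point is invoking the Malliavin chain rule at the reduced regularity of merely Lipschitz $g$ and $f^j$, but this is precisely what Nualart's Proposition~1.2.4 is designed for: it requires only Lipschitz continuity together with $\mathbf{D}^{1,p}$-membership of the argument, and provides a bounded-gradient representative of the derivative. Once (i) and (ii) are in hand, Theorem~\ref{th:3} applies directly and yields the unique global solution to~\eqref{eq:10}.
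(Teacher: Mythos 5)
Your proposal is correct and follows essentially the same route as the paper: reduce to Theorem~\ref{th:3} by checking that $X_1\in\mathbf{D}^{1,2}$ (Nualart, Theorem~2.2.1), transferring this to $\Sigma_0$ and $\psi$ via the Lipschitz chain rule, and using boundedness of $\mu$ and $\sigma$ to get exponential moments of all orders for~\eqref{eq:17}. The only differences are cosmetic: you spell out the sub-Gaussian estimate for the martingale part, which the paper leaves as a one-line assertion.
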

\begin{proof}
  It is well-known, see Theorem~2.2.1 in \cite{Nualart:06}, that under
  the stated assumptions $X_1 \in \mathbf{D}^{1,2}$. By the chain rule
  of Malliavin calculus this also holds for the Lipschitz-continuous
  transformations $\Sigma_0$ and $\psi$ of $X_1$; see
  Proposition~1.2.4 in \cite{Nualart:06}. As the coefficients $\mu$
  and $\sigma$ are bounded, $X_1$ and, therefore, $\Sigma_0$ and $\psi
  = (\psi^j)$ have finite exponential moments of any order. The
  assertion now follows from Theorem~\ref{th:3}.
\end{proof}

\section{Conditions in terms of SDE-coefficients}
\label{sec:conditions-K-H}

In this section we state solvability criteria for~\eqref{eq:10} in
terms of the stochastic fields $K$ and $H$. These conditions will be
used later in the proofs of the main theorems.

\begin{Lemma}
  \label{lem:1}
  Suppose the process $K$ defined in \eqref{eq:9} takes values in
  $\mathbf{C}^1((-\infty,0)^M \times \mathbf{R}^J, \mathbf{R}^{M\times
    d})$ and for every compact set $C\subset (-\infty,0)^M \times
  \mathbf{R}^J$
  \begin{displaymath}
    \int_0^1 \norm{K_t}^2_{1,C} dt  < \infty. 
  \end{displaymath}
  If $Q$ is a predictable process with values in $\mathbf{R}^J$ such
  that for every compact set $C\subset (-\infty,0)^M$
  \begin{displaymath}
    \int_0^1 \norm{K_t(\cdot,Q_t)}^2_{1,C} dt  < \infty,
  \end{displaymath}
  then there is a unique maximal local solution $U$
  to~\eqref{eq:10}. In particular, such a solution exists for every
  locally bounded predictable $Q$.
\end{Lemma}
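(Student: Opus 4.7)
The plan is a standard localization argument, reducing~\eqref{eq:10} to a sequence of SDEs with globally Lipschitz (in $u$) coefficients whose Lipschitz and growth constants are predictable and square-integrable in $t$. First, I would exhaust the target state space $(-\infty,0)^M$ by an increasing sequence of compact sets $C_n$ with $\bigcup_n \interior C_n = (-\infty,0)^M$, e.g.\ $C_n \set [-n,-1/n]^M$, and pick smooth cutoffs $\chi_n$ with $\chi_n\equiv 1$ on $C_n$ and $\supp \chi_n \subset \interior C_{n+1}$. Define
\[
K^{(n)}_s(u) \set \chi_n(u)\, K_s(u,Q_s), \qquad u\in\mathbf{R}^M,
\]
extended by $0$ outside $\supp \chi_n$. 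Using the product rule together with the $\mathbf{C}^1$-regularity of $K_s(\cdot,Q_s)$, one verifies that for all $u,u'\in\mathbf{R}^M$
\[
\abs{K^{(n)}_s(u)-K^{(n)}_s(u')} \leq L^{(n)}_s\abs{u-u'}, \qquad \abs{K^{(n)}_s(u)}\leq M^{(n)}_s,
\]
where both $L^{(n)}_s$ and $M^{(n)}_s$ are dominated by a deterministic constant (depending only on $\chi_n$) times $\norm{K_s(\cdot,Q_s)}_{1,C_{n+1}}$; the second integrability hypothesis of the lemma then gives $\int_0^1\bigl((L^{(n)}_s)^2+(M^{(n)}_s)^2\bigr)ds<\infty$ almost surely.

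Next, the classical existence/uniqueness theorem for SDEs with predictable Lipschitz coefficients (Picard iteration after localizing via $\sigma_k\set \inf\braces{t:\int_0^t((L^{(n)}_s)^2+(M^{(n)}_s)^2)\,ds>k}$ and patching as $k\to\infty$) yields, for each $n$ with $U_0\in\interior C_n$, a unique strong solution $U^{(n)}$ on $[0,1]$ to
\[
U^{(n)}_t = U_0 + \int_0^t K^{(n)}_s(U^{(n)}_s)\,dB_s.
\]
Set $\tau_n \set \inf\braces{t\in[0,1] : U^{(n)}_t\notin C_n}$, with $\inf\emptyset \set \infty$. On $[0,\tau_n)$ we have $\chi_n(U^{(n)}_s)=1$, so $U^{(n)}$ solves~\eqref{eq:10}; since $C_n\subset\interior C_{n+1}$, $U^{(n)}$ also solves the $(n+1)$-st SDE on $[0,\tau_n)$, and pathwise uniqueness applied there forces $U^{(n+1)}=U^{(n)}$ on $[0,\tau_n]$, hence $\tau_n\leq \tau_{n+1}$. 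Setting $\tau\set \lim_n\tau_n$ and $U_t\set U^{(n)}_t$ for $t\in[0,\tau_n)$ produces an adapted process on $[0,\tau)\cap[0,1]$ which solves~\eqref{eq:10} when stopped at any $\sigma<\tau$, with uniqueness inherited from the localized uniqueness.

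It remains to verify the explosion condition. Since each component $U^m$ is a negative submartingale (being a negative local martingale), the limits $U^m_\tau\set\lim_{t\uparrow\tau}U^m_t$ exist and are finite on $\braces{\tau<\infty}$, as remarked in the paper. If $\max_m U^m_\tau<0$ on an event of positive probability, then $U_\tau$ would lie in $\interior C_N$ for some $N$; by path continuity $U_t\in\interior C_N$ on a left neighborhood $(\tau-\delta,\tau)$ of $\tau$, so the exit times $\tau_n$ of $U$ from $C_n$ cannot accumulate at $\tau$ from the left for $n\geq N$, contradicting $\tau_n\uparrow\tau$. Hence $\max_m U^m_t\to 0$ on $\braces{\tau<\infty}$, as required. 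Finally, the ``in particular'' assertion follows by choosing stopping times $\rho_k\uparrow\infty$ with $\abs{Q_t}\leq k$ on $[0,\rho_k\wedge 1]$ and observing $\norm{K_t(\cdot,Q_t)}_{1,C}\leq \norm{K_t}_{1,C\times\overline{B}(0,k)}$ on that set, which is in $L^2([0,1])$ by the first hypothesis.

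The only genuine obstacle is the bookkeeping around the random Lipschitz and growth constants $L^{(n)}_s,M^{(n)}_s$: because they depend on the sample paths of $Q$ and the field $K$, Picard iteration must be run on stochastic intervals on which $\int_0^\cdot((L^{(n)}_s)^2+(M^{(n)}_s)^2)\,ds$ is bounded and the pieces subsequently patched; once this random-coefficient version of the existence/uniqueness theorem is in hand, the rest of the argument is routine.
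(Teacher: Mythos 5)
Your argument is correct and is, in substance, a self-contained proof of the result the paper simply invokes: the published proof is a one-line citation of Kunita's criterion for maximal local solutions of SDEs with locally Lipschitz random coefficients (Theorem 3.4.5 in \cite{Kunit:90}), whose standard proof is precisely your truncation-and-patching scheme with the random Lipschitz constant controlled by $\norm{K_s(\cdot,Q_s)}_{1,C_{n+1}}$. The only steps worth tightening are the explosion argument --- it is cleanest to note that $U_{\tau_n}\in\partial C_n$ and that $\lim_{t\uparrow\tau}U_t$ is finite by the negative-submartingale property, so the alternative ``some coordinate equals $-n$'' occurs only finitely often and hence $\max_m U^m_{\tau_n}=-1/n\to 0$ --- and the uniqueness of the maximal solution against an arbitrary competitor, both of which are routine consequences of your localized pathwise uniqueness.
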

\begin{proof}
  Follows from well-known criteria for maximal local solutions; see
  Theorem 3.4.5 in \cite{Kunit:90}.
\end{proof}

For vectors $x,y \in \mathbf{R}^M$ we shall write $x\geq y$ if
$x^m\geq y^m$, $m=1,\dots,M$. Denote $\idvec \set (1,\dots,1)$.

\begin{Lemma}
  \label{lem:2}
  Let the processes $K$ and $Q$ satisfy the conditions of
  Lemma~\ref{lem:1} and suppose that for every constant $b>0$
  \begin{equation}
    \label{eq:20}
    \int_0^1 \sup_{-b \idvec \leq u < 0} L_t(u,Q_t) dt < \infty,
  \end{equation}
  where, for $u\in (-\infty,0)^M$, $q\in \mathbf{R}^J$, and $t\in
  [0,1]$,
  \begin{equation}
    \label{eq:21}
    L_t(u,q) \set \frac1{1 + \sum_{m=1}^M \abs{\log(-u^m)}} 
    \sum_{m=1}^M \left|\frac{{K^m_t(u,q)}}{u^m}\right|^2. 
  \end{equation}
  Then~\eqref{eq:10} has a unique (global) solution.
\end{Lemma}

\begin{proof}
  In view of Lemma~\ref{lem:1} we have to show that the explosion time
  $\tau$ for the maximal local solution $U$ to~\eqref{eq:10} is
  infinite.  By localization and accounting for the submartingale
  property of $U$ we can assume without a loss in generality that
  $U^m\geq -b$ for some $b>0$.
 
  After the substitution $U^m = -\exp(Z^m)$, $m=1,\dots,M$, we can
  rewrite~\eqref{eq:10} as
  \begin{displaymath}
    Z^m_t = Z^m_0 + \int_0^t A^m_s(Z_s) dB_s - \frac12 \int_0^t
    \abs{A^m_s(Z_s)}^2 ds, 
  \end{displaymath}
  where
  \begin{displaymath}
    A^m_t(z) \set - e^{-z^m} K^m_t(-e^{z^1}, \dots,-e^{z^M},Q_t),
    \quad m=1,\dots,M. 
  \end{displaymath}
  Each component of $Z$ is bounded from above by $\log b$ and
  \begin{displaymath}
    \lim_{t\uparrow \tau} \min_{m=1,\dots,M}Z^m_t = -\infty \mtext{on} \{\tau <
    \infty\}.  
  \end{displaymath}
  It is well-known, see \cite[Theorem 3.4.6]{Kunit:90}, that
  $\tau=\infty$ if
  \begin{displaymath}
    \int_0^{1} \sup_{z\leq \idvec\log b}\frac{\abs{A^m_t(z)}^2}{1 +
      \abs{z}} dt < \infty, \quad m=1, \dots, M, 
  \end{displaymath}
  which is equivalent to~\eqref{eq:20}.
\end{proof}

When $Q$ is locally bounded, we can state more convenient conditions
in terms of the ``primal'' processes $F$ and $H$. Recall first a
result from \cite{BankKram:11b}; see Lemma~5.15 and Theorems~5.16
and~5.17.

\begin{Lemma}[\cite{BankKram:11b}]
  \label{lem:3}
  Let Assumptions~\ref{as:1} and~\ref{as:2} and condition~\eqref{eq:6}
  hold. Suppose the process $H$ of~\eqref{eq:8} has values in
  $\mathbf{C}^2(\mathbf{A}, \mathbf{R}^d)$ and for every compact set
  $C\subset \mathbf{A}$
  \begin{equation}
    \label{eq:22}
    \int_0^1 \norm{H_t}^2_{2,C} dt < \infty. 
  \end{equation}
  Then the process $K$ of~\eqref{eq:9} satisfies the conditions of
  Lemma~\ref{lem:1} and for every locally bounded predictable $Q$ with
  values in $\mathbf{R}^J$ there is a unique maximal local solution
  to~\eqref{eq:10}.
\end{Lemma}

\begin{Remark}
  \label{rem:1}
  As $H(bv,x,q) = bH(v,x,q)$ for $b>0$ and $(v,x,q)\in \mathbf{A}$, it
  is sufficient to verify~\eqref{eq:22} for compact sets $C \subset
  \widetilde{\mathbf{A}} \set \mathbf{S}^M \times \mathbf{R} \times
  \mathbf{R}^J$, where $\mathbf{S}^M$ is the interior of the simplex
  in $\mathbf{R}^M$:
  \begin{displaymath}
    \mathbf{S}^M \set \descr{w \in (0,1)^M}{\sum_{m=1}^M w^m = 1}.
  \end{displaymath}
\end{Remark}

We now state criteria for the existence of a global solution. In its
proof we shall make use of the conjugacy relations between the
stochastic fields $F$ and $G$, which for $a= (v,x,q) \in \mathbf{A}$,
$b=(u,1,q)\in \mathbf{B}$, and $t\in [0,1]$ state that
\begin{align}
  \label{eq:23}
  v &= \frac{\partial G_t}{\partial u}\left(\frac{\partial
      F_t}{\partial
      v}(a),1,q\right), \\
  \label{eq:24}
  x &= G_t\left(\frac{\partial F_t}{\partial v}(a),1,q\right), \\
  \label{eq:25}
  u &= \frac{\partial F_t}{\partial v}\left(\frac{\partial
      G_t}{\partial u}(b),G_t(b),q\right);
\end{align}
see Corollary 4.14 in \cite{BankKram:11b}.

\begin{Lemma}
  \label{lem:4}
  Assume the conditions of Lemma~\ref{lem:3}. Suppose also that
  \begin{equation}
    \label{eq:26}
    \int_0^1 \sup_{a\in A_t(b)} M_t(a) dt < \infty,\quad b>0,
  \end{equation}
  where, for $t\in [0,1]$ and a constant $b>0$,
  \begin{displaymath}
    A_t(b) \set \descr{a=(v,x,q)\in \mathbf{A}}{\frac{\partial
        F_t}{\partial v}(a)\geq -b\idvec,\; 
      \abs{q}\leq b},  
  \end{displaymath}
  and, for $a=(v,x,q)\in \mathbf{A}$,
  \begin{displaymath}
    M_t(a) \set  \frac1{1+\abs{x}}
    \sum_{m=1}^M \left|\frac{1}{\frac{\partial F_t}{\partial v^m}(a)} 
      \frac{\partial H_t}{\partial v^m}(a)\right|^2.  
  \end{displaymath}
  Then for every locally bounded predictable $Q$ with values in $
  \mathbf{R}^J$ there is a unique (global) solution to~\eqref{eq:10}.
\end{Lemma}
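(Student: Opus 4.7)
My plan is to reduce the claim to Lemma~\ref{lem:2} by translating the primal condition~\eqref{eq:26} on $H$ into the dual condition~\eqref{eq:23} on $K$ via the conjugacy relations. Since the hypotheses of Lemma~\ref{lem:3} are in force, a unique maximal local solution $U$ already exists with some explosion time $\tau$, and it suffices to verify~\eqref{eq:23} a.s.\ for every $b>0$ to invoke Lemma~\ref{lem:2} and conclude $\tau=\infty$.

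Fix $b>0$, $t\in[0,1]$, $q$ with $|q|\le b$, and $u\in[-b\idvec,0)^M$. Using the $0$-homogeneity in $v$ of $\partial H_t/\partial v^m$ and $\partial F_t/\partial v^m$ (cf.~Remark~\ref{rem:1}), I normalize $v\in\mathbf{S}^M$ and set $x\triangleq G_t(u,1,q)$. The conjugacy identities~\eqref{eq:20}--\eqref{eq:22} then yield $u=\partial F_t/\partial v(v,x,q)$, while the definition~\eqref{eq:9} of $K$ gives $K^m_t(u,q)=\partial H_t/\partial v^m(v,x,q)$. Writing $a\triangleq(v,x,q)$, we have $a\in A_t(b)$, and direct substitution produces the identity
\begin{displaymath}
L_t(u,q) \;=\; \frac{1+|x|}{1+\sum_{m=1}^M|\log(-u^m)|}\,M_t(a).
\end{displaymath}
Thus~\eqref{eq:23} will follow from~\eqref{eq:26} once I show that the random variable
\begin{displaymath}
\Gamma_t \;\triangleq\; \sup_{u\in[-b\idvec,0)^M}\frac{1+|x(u)|}{1+\sum_m|\log(-u^m)|}, \qquad x(u)\triangleq G_t(u,1,Q_t),
\end{displaymath}
has $\omega$-a.s.\ bounded sample paths on $[0,1]$.

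The main obstacle is this boundedness of $\Gamma_t$, and it is where the properties of $r$ from~\eqref{eq:4}--\eqref{eq:5} will be used decisively. First, Euler's identity for the $v$-homogeneous $r(\cdot,y)$ followed by taking conditional expectations gives $-F_t(v,x,q)=\sum_m v^m(-u^m)$, so for $v\in\mathbf{S}^M$ one has $|F_t(v,x,q)|\in[\min_m|u^m|,\max_m|u^m|]$ and hence $|\log|F_t(v,x,q)||\le\sum_m|\log(-u^m)|$. Second, the bound~\eqref{eq:5} is equivalent to $\partial_y\log(-r(v,y))\in[-c/M,-1/(cM)]$; integrating across $y=\Sigma_0+x+\ip{q}{\psi}$, exponentiating and taking conditional expectations yields
\begin{displaymath}
-\tfrac{c}{M}x^+ + \tfrac{1}{cM}x^- \;\le\; \log|F_t(v,x,q)|-\log|F_t(v,0,q)| \;\le\; -\tfrac{1}{cM}x^+ + \tfrac{c}{M}x^-.
\end{displaymath}
Combining these (treating $x\ge 0$ and $x<0$ separately) gives
\begin{displaymath}
|x| \;\le\; cM\bigl(\textstyle\sum_m|\log(-u^m)|+|\log|F_t(v,0,Q_t)||\bigr).
\end{displaymath}
It remains to bound $|\log|F_t(v,0,Q_t)||$ by an a.s.\ $t$-bounded quantity. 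Applying the same integration of~\eqref{eq:5} at $x=0$ relates $\log|F_t(v,0,q)|$ to $\log|r(v,0)|$ up to $\log\mathbb{E}[e^{\pm(c/M)|\Sigma_0+\ip{q}{\psi}|}\mid\mathcal{F}_t]$; by Doob's maximal inequality and the a.s.\ positivity of the limit, these two continuous positive UI martingales are a.s.\ both bounded and bounded away from $0$ on $[0,1]$. Meanwhile $|\log(-r(v,0))|$ is bounded on any compact subset of $\mathbf{S}^M$, and the conjugate $v(u)$ does lie in such a compact subset as $u$ ranges over $[-b\idvec,0)^M$: if $v^m\to 0$, the first-order condition $v^m u_m'(\hat x^m(v,y))=v^k u_k'(\hat x^k(v,y))$ for the optimization in~\eqref{eq:2} forces $\hat x^m(v,y)\to-\infty$, hence by the envelope identity $\partial r/\partial v^m(v,y)=u_m(\hat x^m(v,y))$ and dominated convergence $u^m=\mathbb{E}[u_m(\hat x^m(v,\Sigma_0+x+\ip{q}{\psi}))\mid\mathcal{F}_t]\to-\infty$, so $u^m\ge-b$ confines $v$ away from $\partial\mathbf{S}^M$. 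These estimates give $\Gamma_t\le cM+\beta_t$ with $\beta_t$ a.s.\ bounded on $[0,1]$, so Lemma~\ref{lem:2} completes the proof.
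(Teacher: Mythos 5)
Your reduction to Lemma~\ref{lem:2} and the identity $L_t(u,q)=\frac{1+|x|}{1+\sum_m|\log(-u^m)|}\,M_t(a)$, with $a=(v,x,q)$, $x=G_t(u,1,q)$ and $v$ the (normalized) conjugate point, are exactly the right skeleton and coincide with the paper's argument; everything therefore hinges on the boundedness of your $\Gamma_t$, and it is precisely there that the proof breaks down, in two places. First, the claim that the conjugate weight $v(u,q)$ stays in a compact subset of $\mathbf{S}^M$ as $u$ ranges over $[-b\idvec,0)^M$ is false. Your first-order-condition argument tacitly keeps $y=\Sigma(x,q)$ fixed, but $x=G_t(u,1,q)$ is not fixed: if, say, $u^k\to 0^-$ while the other coordinates stay at $-b$, then $x\to+\infty$ and, by the very estimate the paper invokes (property (G7): $1/c\le -u^m\,\partial G_t/\partial u^m(u,1,q)\le c$), one has $\partial G_t/\partial u^k\to\infty$ while $\partial G_t/\partial u^m$ stays bounded for $m\ne k$; after normalization $v^m\to 0$ for every $m\ne k$ even though $u^m=-b$. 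Since $-r(v,0)\to 0$ as $v$ approaches the boundary of $\mathbf{S}^M$ (visible already in the exponential case~\eqref{eq:13}), the quantity $|\log(-r(v,0))|$ you need to control is unbounded over the actual range of $v$, and $\Gamma_t\le cM+\beta_t$ does not follow. Second, the martingale $\mathbb{E}[e^{(c/M)|\Sigma_0+\ip{q}{\psi}|}\mid\mathcal{F}_t]$ need not be well defined: Lemma~\ref{lem:3} assumes only~\eqref{eq:6}, which via~\eqref{eq:4}--\eqref{eq:5} yields integrability of roughly $e^{(\Sigma_0+\ip{q}{\psi})^-/(cM)}$, whereas your bound requires the larger exponent $c/M$ and control of both tails; the stronger moment condition~\eqref{eq:7} is introduced only in Lemma~\ref{lem:5} for exactly this purpose. (This second gap is repairable, e.g.\ by dominating $-F_t(v,0,q)$ by $\sum_{\epsilon\in\{\pm1\}^J}\mathbb{E}[-r(v,\Sigma(0,b\epsilon))\mid\mathcal{F}_t]$ using monotonicity and convexity of $-r(v,\cdot)$; the first is not.)

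The paper avoids the primal detour altogether: it bounds $|G_t(u,1,q)|$ directly on the dual side by integrating (G7) in $u$, which gives~\eqref{eq:27}, i.e.\ $|G_t(u,1,q)-G_t(-b\idvec,1,q)|\le \max(b,1/b)\sum_m|\log(-u^m/b)|$, and then localizes with stopping times $\sigma_n$ at which $\sup_{|q|\le b}|G_t(-b\idvec,1,q)|$ exceeds $n$ (legitimate because $G$ has continuous paths in $\mathbf{C}(\mathbf{B})$). This yields $\Gamma_t\le c(b,n)$ on $[0,\sigma_n\wedge 1]$ with no extra integrability assumptions and no compactness claim on $v$. To salvage your write-up you should replace the compactness step by this (or an equivalent) quantitative bound on $u\mapsto G_t(u,1,q)$.
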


\begin{Remark}
  \label{rem:2}
  Since for all $(v,x)\in (0,\infty)^M\times \mathbf{R}$ and
  $m=1,\dots,M$
  \begin{displaymath}
    \frac1c \frac{\partial r}{\partial x}(v,x)  \leq -v^m
    \frac{\partial r}{\partial 
      v^m}(v,x)\leq c \frac{\partial r}{\partial x}(v,x),   
  \end{displaymath}
  see Section~4.1 in \cite{BankKram:11a}, the bound~\eqref{eq:26} is
  equivalent to
  \begin{displaymath}
    \int_0^1 \sup_{a\in \widetilde A_t(b)} \widetilde M_t(a) dt <
    \infty, \quad b>0,
  \end{displaymath}
  where
  \begin{align*}
    \widetilde A_t(b) &\set \descr{a=(v,x,q)\in
      \mathbf{A}}{\frac{\partial
        F_t}{\partial x}(a)\idvec\leq bv,\; \abs{q}\leq b},   \\
    \widetilde M_t(a) &\set \frac1{(1+\abs{x})(\frac{\partial
        F_t}{\partial x}(a))^2} \sum_{m=1}^M \left|v^m \frac{\partial
        H_t}{\partial v^m}(a)\right|^2.
  \end{align*}
\end{Remark}

\begin{proof}
  Let $c>0$ denote the constant appearing in~\eqref{eq:1}. By
  Lemma~\ref{lem:2} it is enough to show that for $L = L_t(u,q)$
  defined in~\eqref{eq:21} and every $b\geq c$
  \begin{displaymath}
    \int_0^1 \sup_{-b\idvec \leq u < 0, \; \abs{q}\leq b} L_t(u,q) dt < \infty.  
  \end{displaymath}
  
  By Corollary~5.4 in \cite{BankKram:11b} the process $G=G_t(b)$ has
  trajectories in $\mathbf{C}([0,1],\widetilde{\mathbf{G}}^2(c))$,
  where $\widetilde{\mathbf{G}}^2(c)$ is a linear subspace of saddle
  functions in $\mathbf{C}^2(\mathbf{B})$ defined and studied in
  Section~3 of \cite{BankKram:11a}. Property (G7) of the elements of
  $\widetilde{\mathbf{G}}^2(c)$ yields that
  \begin{displaymath}
    \frac1b \leq \frac1c \leq -u^m \frac{\partial G_t}{\partial
      u^m}(u,1,q) \leq c \leq b, \quad m=1,\dots, M,  
  \end{displaymath}
  which for $-b\idvec \leq u < 0$ implies
  \begin{equation}
    \label{eq:27}
    b  \sum_{m=1}^M \log(-u^m/b) \leq G_t(-b\idvec,1,q) - G_t(u,1,q)
    \leq \frac1b \sum_{m=1}^M \log(-u^m/b). 
  \end{equation}
  
  For $n\geq 1$ define the stopping times
  \begin{displaymath}
    \sigma_n \set \inf\descr{t\in [0,1]}{\sup_{\abs{q}\leq
        b}\abs{G_t(-b\idvec,1,q)} > n},  
  \end{displaymath}
  where, by convention, $\inf \varnothing\set\infty$.  Since the
  sample paths of $G =G_t(b)$ belong to
  $\mathbf{C}([0,1],\mathbf{C}(\mathbf{B}))$, we deduce $\sigma_n \to
  \infty$, $n\to \infty$. Hence, the result holds if
  \begin{displaymath}
    \int_0^{\sigma_n\wedge 1} \sup_{-b\idvec \leq u < 0, \;
      \abs{q}\leq b} L_t(u,q) dt < \infty, \quad n\geq 1.  
  \end{displaymath}
  Observe now that the conjugacy
  relations~\eqref{eq:23}--\eqref{eq:25} jointly with~\eqref{eq:27}
  and the construction of $K = K_t(u,q)$ in~\eqref{eq:9} imply that
  for $t \leq \sigma_n \wedge 1$
  \begin{displaymath}
    \sup_{-b\idvec < u < 0, \; \abs{q}\leq b} L_t(u,q) \leq c(b,n)
    \sup_{a=(v,x,q) \in A_t(b)} M_t(a),
  \end{displaymath}
  where the constant $c(b,n)$ depends only on $b$ and $n$. The result
  now follows from~\eqref{eq:26}.
\end{proof}

The conditions of Lemma~\ref{lem:4} can be simplified further if
instead of~\eqref{eq:6} we assume the stronger integrability
condition~\eqref{eq:7}.

\begin{Lemma}
  \label{lem:5}
  Let Assumptions~\ref{as:1} and~\ref{as:2} and
  conditions~\eqref{eq:7} and~\eqref{eq:22} hold. Then \eqref{eq:26}
  is equivalent to
  \begin{equation}
    \label{eq:28}
    \int_0^1 \sup_{a\in B(b)} N_t(a) dt < \infty, \quad b>0, 
  \end{equation}
  where, for a constant $b>0$,
  \begin{displaymath}
    B(b) \set \descr{a=(v,x,q)\in  \mathbf{A}}{
      \frac{\partial r}{\partial x}(v,x) \idvec 
      \leq bv,\; \abs{q}\leq b},  
  \end{displaymath}
  and, for $a=(v,x,q)\in \mathbf{A}$,
  \begin{displaymath}
    N_t(a) \set  \frac1{(1+\abs{x})(\frac{\partial r}{\partial
        x}(v,x))^2} \sum_{m=1}^M \left|v^m \frac{\partial
        H_t}{\partial v^m}(a)\right|^2.   
  \end{displaymath}
  In this case, for every locally bounded predictable $Q$ with values
  in $ \mathbf{R}^J$ there is a unique solution to~\eqref{eq:10}.
\end{Lemma}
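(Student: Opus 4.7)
The plan is to reduce Lemma~\ref{lem:5} to Lemma~\ref{lem:4} in the equivalent form given by Remark~\ref{rem:2}, by comparing the $F$-based quantities $\widetilde A_t(b), \widetilde M_t(a)$ with the $r$-based ones $B(b), N_t(a)$ using the exponential growth bound (4) and the integrability (7). Since (7) implies (6), Lemma~\ref{lem:3} already supplies a unique maximal local solution $U$ of (10). To upgrade it to a global solution via Lemma~\ref{lem:4}, it suffices to verify, for every $b>0$,
\begin{displaymath}
\int_0^1 \sup_{a \in \widetilde A_t(b)} \widetilde M_t(a)\, dt < \infty \quad \text{$\P$--a.s.}
\end{displaymath}

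For the key comparison, set $y \set \Sigma_0 + \ip{q}{\psi}$, so that $\Sigma(x,q) = x+y$. The lower bound in (4) yields $\frac{\partial r}{\partial x}(v,x+y) \geq e^{-y^+ c/M}\,\frac{\partial r}{\partial x}(v,x)$. Since $y^+ \leq \Sigma_0^+ + b\abs{\psi}$ whenever $\abs{q}\leq b$, taking $\cF_t$-conditional expectations gives
\begin{displaymath}
\frac{\partial F_t}{\partial x}(v,x,q) \;\geq\; \zeta_t \cdot \frac{\partial r}{\partial x}(v,x), \quad\text{where}\quad \zeta_t \set \E\bigl[e^{-(\Sigma_0^+ + b\abs{\psi})c/M}\,\big|\,\cF_t\bigr] \in (0,1].
\end{displaymath}
Hence on $\widetilde A_t(b)$ (where $\partial F_t/\partial x(a) \leq b v^m$ for all $m$) one has $\frac{\partial r}{\partial x}(v,x) \leq (b/\zeta_t) v^m$, and since $\abs{q}\leq b \leq b/\zeta_t$, we conclude $\widetilde A_t(b) \subset B(b/\zeta_t)$; moreover,
\begin{displaymath}
\widetilde M_t(a) \;=\; \Bigl(\frac{\partial r/\partial x(v,x)}{\partial F_t/\partial x(a)}\Bigr)^{\!2} N_t(a) \;\leq\; \frac{1}{\zeta_t^2}\, N_t(a).
\end{displaymath}

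The main obstacle is that $\zeta_t$ is random, so (28) cannot be invoked directly with the random parameter $b/\zeta_t$; I would handle this by pathwise localization. Under Assumption~\ref{as:2} every $L^1$ martingale has a continuous version, so $\zeta$ may be taken continuous; since $\zeta_1 = e^{-(\Sigma_0^+ + b\abs{\psi})c/M} > 0$ a.s., the usual absorption-at-zero property of nonnegative continuous martingales forces $\inf_{t\in[0,1]}\zeta_t > 0$ a.s. The stopping times $\tau_n \set \inf\braces{t\in[0,1]: 1/\zeta_t > n}\wedge 1$ therefore satisfy $\tau_n = 1$ for all $n$ sufficiently large, almost surely. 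On $\braces{t \leq \tau_n}$ we have $1/\zeta_t \leq n$, whence $B(b/\zeta_t) \subset B(bn)$ and
\begin{displaymath}
\int_0^{\tau_n} \sup_{a\in\widetilde A_t(b)} \widetilde M_t(a)\, dt \;\leq\; n^2\int_0^1 \sup_{a \in B(bn)} N_t(a)\, dt \;<\; \infty
\end{displaymath}
by (28) applied with $bn$ in place of $b$. Since $\braces{\tau_n = 1}$ has probability tending to $1$ as $n\to\infty$, the displayed a.s. integrability is established, and Lemma~\ref{lem:4} (in the form of Remark~\ref{rem:2}) delivers the unique global solution.
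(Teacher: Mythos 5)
Your proposal is correct and follows essentially the same route as the paper: both reduce the claim to Lemma~\ref{lem:4} via Remark~\ref{rem:2} by bounding $\frac{\partial F_t}{\partial x}(v,x,q)$ from below by $\zeta_t\,\frac{\partial r}{\partial x}(v,x)$ with $\zeta_t = \E[e^{-c(\Sigma_0^+ + b\abs{\psi})/M}\mid\cF_t]$, which is exactly the consequence of~\eqref{eq:4} the paper uses. The only differences are cosmetic: the paper takes $\zeta=\inf_{t}\zeta_t$ and simply asserts its strict positivity (and also records an upper bound $\eta$ that your argument shows is not actually needed for this direction), whereas you justify $\inf_t\zeta_t>0$ via the absorption property of nonnegative continuous martingales and make the passage from a random to a constant bound explicit through the localization in $n$ --- a step the paper leaves implicit.
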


\begin{proof}
  The result follows from Lemma~\ref{lem:4} and Remark~\ref{rem:2} if
  for every $b>0$ we can find strictly positive random variables
  $\eta$ and $\zeta$ such that
  \begin{displaymath}
    \zeta \frac{\partial r}{\partial x}(v,x) \leq 
    \frac{\partial F_t}{\partial x}(v,x,q) \leq 
    \eta \frac{\partial r}{\partial x}(v,x), 
  \end{displaymath}
  for all $(v,x)\in (0,\infty)^M \times \mathbf{R}$, $\abs{q}\leq b$,
  and $t\in [0,1]$.

  From~\eqref{eq:4} we deduce for such $(v,x,q)$ that
  \begin{displaymath}
    e^{-c(\Sigma_0^+ + b\abs{\psi})/M} \frac{\partial
      r}{\partial x}(v,x) \leq \frac{\partial r}{\partial x}(v,\Sigma(x,q))
    \leq e^{c (\Sigma_0^{-} + b\abs{\psi})/M} \frac{\partial
      r}{\partial x}(v,x).
  \end{displaymath}
  The random variables $\zeta$ and $\eta$ can now be chosen as
  \begin{align*}
    \zeta &\set \inf_{t\in [0,1]} \mathbb{E}[e^{-c( \Sigma_0^+ +
      b \abs{\psi})/M} |\mathcal{F}_t], \\
    \eta &\set \sup_{t\in [0,1]} \mathbb{E}[ e^{c(\Sigma_0^- + b
      \abs{\psi})/M}|\mathcal{F}_t].
  \end{align*}
  Note that $\zeta$ is strictly positive and, in view of~\eqref{eq:7},
  $\eta$ is finite.

\end{proof}
 
\section{Proofs of the main results}
\label{sec:proofs}

\subsection{Proof of Theorem~\ref{th:1}}
\label{sec:proof-theor-refth:1}

We divide the proof into a series of lemmas. A key role is played by
the following direct corollary of Proposition~1 in \cite{Szn:81}.

\begin{Lemma}
  \label{lem:6} Suppose Assumption~\ref{as:2} holds. Let $V$ be an
  open set in $\mathbf{R}^n$ and $m$, $j$ be nonnegative integers
  satisfying
  \begin{displaymath} m-\frac{n}2> j.
  \end{displaymath} 
  Consider a random field $\eta = (\eta(x))_{x\in V}$ with sample
  paths in $\mathbf{C}^m(V)$ such that for every compact set $C\subset
  V$
  \begin{displaymath} \mathbb{E}[\norm{\eta}_{m,C}] < \infty.
  \end{displaymath} 

  Then there exists a predictable process $H$ with values in
  $\mathbf{C}^j(V,\mathbf{R}^{d})$ such that for every $x\in V$ and
  $t\in [0,1]$ and every multi-index $\beta=(\beta_1,\dots,\beta_n)$
  of order $0 \leq \abs{\beta}\leq j$
  \begin{displaymath}
    \mathbb{E}[\partial^\beta \eta(x)|\mathcal{F}_t] =
    \mathbb{E}[\partial^\beta \eta(x)] + \int_0^t \partial^\beta
    H_u(x)dB_u. 
  \end{displaymath}
  Moreover, for every compact set $C\subset V$,
  \begin{displaymath}
    \int_0^1 \norm{H_t}^2_{j,C} dt < \infty.
  \end{displaymath}
\end{Lemma}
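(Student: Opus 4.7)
The plan is to reduce the claim to Proposition~1 of \cite{Szn:81} by combining differentiation under the conditional expectation sign with a localization to relatively compact subdomains of $V$.

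First I introduce the martingale $M_t(x) \set \mathbb{E}[\eta(x)|\mathcal{F}_t]$. The hypothesis $\mathbb{E}[\norm{\eta}_{m,C}]<\infty$ on every compact $C\subset V$, together with dominated convergence applied to difference quotients (each dominated by $\norm{\eta}_{m,C}$), yields that $M$ has sample paths in $\mathbf{C}^m(V)$ almost surely and that spatial differentiation commutes with the conditional expectation:
\begin{displaymath}
  \partial^\beta M_t(x) = \mathbb{E}[\partial^\beta \eta(x)|\mathcal{F}_t], \quad x\in V,\; t\in[0,1],\; \abs{\beta}\leq m.
\end{displaymath}
By Assumption~\ref{as:2}, for each fixed $x$ there is a predictable integrand $\widetilde H(x)$ with values in $\mathbf{R}^d$ satisfying $M_t(x) = \mathbb{E}[\eta(x)] + \int_0^t \widetilde H_u(x)\, dB_u$ on $[0,1]$.

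The heart of the argument is to select these integrands jointly in $x$ so that $x \mapsto \widetilde H_\cdot(x)$ inherits sample-path regularity of class $\mathbf{C}^j$ and so that differentiation in $x$ up to order $j$ commutes with the stochastic integral. This is precisely what Proposition~1 of \cite{Szn:81} delivers on any relatively compact open $V'\subset V$: the mechanism is Burkholder--Davis--Gundy applied to finite differences of $M(x)$, combined with the Sobolev embedding $W^{m,2}(V')\hookrightarrow \mathbf{C}^j(V')$. The applicability of that embedding is guaranteed by the numerical assumption $m - n/2 > j$, which is the unique role played by this hypothesis. The resulting integrand $H$ then satisfies both the announced identity for $\abs{\beta}\leq j$ and the bound $\int_0^1 \norm{H_t}^2_{j,C}\, dt < \infty$ for every compact $C\subset V'$.

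Finally, I patch the local constructions into a global one: cover $V$ by a sequence $V_1\subset V_2\subset\dots$ of relatively compact open sets exhausting $V$, apply the previous step on each $V_k$ to obtain an integrand $H^{(k)}$, and use essential uniqueness of the martingale-representation integrand to see that the $H^{(k)}$ agree on overlaps up to indistinguishability. This yields a single predictable process $H$ on $V$ with sample paths in $\mathbf{C}^j(V,\mathbf{R}^d)$; the local $L^2$ bound on an arbitrary compact $C\subset V$ then follows by choosing $k$ large enough that $C\subset V_k$. Since the joint-regularity selection is the content of the cited proposition, the main obstacle is conceptual rather than computational: recognizing that the Sobolev threshold $m - n/2 > j$ and the $L^1$-type integrability of $\norm{\eta}_{m,C}$ together fit the hypotheses of Sznitman's result. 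Once this identification is made, essentially no further work is needed.
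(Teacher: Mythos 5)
There is a genuine gap at the point where you assert that the $\mathbf{L}^1$-type hypothesis $\mathbb{E}[\norm{\eta}_{m,C}]<\infty$ ``fits the hypotheses of Sznitman's result.'' It does not. The mechanism you yourself describe --- Burkholder--Davis--Gundy applied to increments of $M(x)$ together with the embedding $W^{m,2}\hookrightarrow \mathbf{C}^j$ --- is an $L^2$ argument, and Proposition~1 of \cite{Szn:81} accordingly requires second moments, i.e.\ $\mathbb{E}[\norm{\eta}^2_{m,V}]<\infty$. The lemma only provides first moments, so the entire nontrivial content of the proof is the reduction from $\mathbf{L}^1$ to $\mathbf{L}^2$, which your argument skips. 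Your localization is purely spatial (an exhaustion of $V$ by relatively compact subsets), which shrinks the domain but does nothing to improve the integrability of $\norm{\eta}_{m,V}$ in $\omega$.

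The paper bridges this by a probabilistic localization: one introduces the continuous martingale $L_t \set \mathbb{E}[\norm{\eta}_{m,V}|\mathcal{F}_t]$ (continuity comes from Assumption~\ref{as:2}), takes stopping times $\tau_n$ with $\abs{L}\leq n$ on $[0,\tau_n]$ and $\braces{\tau_n<1}\downarrow\emptyset$, and replaces $\eta$ by $\eta_n(x)\set \mathbb{E}[\eta(x)|\mathcal{F}_{\tau_n}]$. A separate result (Lemma~4.5 of \cite{BankKram:11b}) is needed to guarantee that $\eta_n$ still has a $\mathbf{C}^m$ version with $\norm{\eta_n}_{m,V}\leq L_{\tau_n}\leq n$; note that one cannot simply stop the field $M_t(x)$, because Sznitman's proposition must be applied to a terminal random field with bounded $\mathbf{C}^m$ norm. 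Only after this reduction, which gives $\mathbb{E}[\norm{\eta_n}^2_{m,V}]<\infty$ for free, does the citation of Proposition~1 become legitimate. Your remaining steps (differentiation under the conditional expectation, patching over an exhaustion of $V$) are fine but peripheral; the missing stopping-time argument is the actual proof.
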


\begin{proof} Without restricting generality we can assume that $V$ is
  a ball in $\mathbf{R}^d$ and
  \begin{displaymath}
    \mathbb{E}[\norm{\eta}_{m,V}] < \infty.
  \end{displaymath}
  Assumption~\ref{as:2} implies that
  \begin{displaymath} L_t \set
    \mathbb{E}[\norm{\eta}_{m,V}|\mathcal{F}_t], \quad t\in [0,1],
  \end{displaymath} is a continuous martingale. Hence, there is a
  sequence of stopping times $(\tau_n)_{n\geq 1}$ such that
  $\braces{\tau_n<1}\downarrow \emptyset$ and $\abs{L}\leq n$ 
  on $[0,\tau_n]$.  By Lemma~4.5 in \cite{BankKram:11b}, the random field
  \begin{displaymath} \eta_n(x) \set
    \mathbb{E}[\eta(x)|\mathcal{F}_{\tau_n}], \quad x\in V,
  \end{displaymath} has a version with values in $\mathbf{C}^m$ and
  $\norm{\eta_n}_{m,V} \leq L_{\tau_n} \leq n$.

  These localization arguments imply that without any loss of
  generality we can assume that the random variable $
  \norm{\eta}_{m,V}$ is bounded and, in particular,
  \begin{displaymath}
    \mathbb{E}[\norm{\eta}^2_{m,V}] < \infty.
  \end{displaymath} 
  However, under this condition the assertion of the lemma is a
  special case of Proposition 1 in \cite{Szn:81}.
\end{proof}

\begin{Lemma}
  \label{lem:7}
  Let Assumption~\ref{as:1} hold and suppose each $u_m$,
  $m=1,\ldots,M$, is of class $\mathbf{C}^{l+2}$ for an integer $l\geq
  0$. Then~\eqref{eq:11} is equivalent to the condition
  \begin{displaymath}
    \sup_{x\in \mathbf{R}}
    \left.\frac{\abs{u^{(k)}_m(x)}}{u'_m(x)}\right. <
    \infty, \quad k=0,\dots,l+2,\quad m=1,\dots,M,
  \end{displaymath}
  and also to the condition
  \begin{equation}
    \label{eq:29}
    \sup_{x\in \mathbf{R}} \abs{t^{(k)}_m(x)} < \infty, \quad
    k=0,\dots,l,\quad m=1,\dots,M,
  \end{equation}
  where $t_m(x) \set 1/a_m(x)$ is the absolute risk-tolerance of $u_m
  = u_m(x)$.
\end{Lemma}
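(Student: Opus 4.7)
The plan is to establish both equivalences by standard algebraic manipulations, exploiting that Assumption~\ref{as:1} makes $a_m$ and its reciprocal $t_m=1/a_m$ bounded above and away from zero (with the same constant $c$).

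First, I would handle the equivalence between~\eqref{eq:11} and~\eqref{eq:29}. Since $a_mt_m = 1$ and both factors lie in $[1/c,c]$, differentiating this identity repeatedly and applying Leibniz's rule lets one solve algebraically for $t_m^{(k)}$ in terms of $a_m, a_m',\dots,a_m^{(k)}$ (divided by powers of $a_m$), and symmetrically for $a_m^{(k)}$ in terms of $t_m$ and its derivatives. The bounds $a_m,t_m \in [1/c,c]$ make these divisions harmless, so boundedness of the first $l$ derivatives propagates both ways by induction on $k$.

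Next I would relate~\eqref{eq:11} to the ratio condition $\sup_x |u_m^{(k)}(x)|/u_m'(x) < \infty$ for $k=0,\dots,l+2$. Write $\rho_k := u_m^{(k)}/u_m'$. The ODE $u_m'' = -a_m u_m'$ gives $\rho_2 = -a_m$, and differentiating yields $\rho_{k+1} = \rho_k' + \rho_2 \rho_k$, so by induction each $\rho_k$ ($k\geq 2$) is a polynomial in $a_m, a_m',\dots,a_m^{(k-2)}$, and conversely each $a_m^{(j)}$ is a polynomial in $\rho_2,\dots,\rho_{j+2}$. The cases $k=0,1$ need separate treatment: $\rho_1 \equiv 1$ is trivial, and for $\rho_0 = u_m/u_m'$ one uses that $u_m<0$ with $u_m(x)\to 0$ as $x\to\infty$ to write
\begin{equation*}
  \frac{|u_m(x)|}{u_m'(x)} = \int_x^\infty \frac{u_m'(s)}{u_m'(x)}\,ds
  = \int_x^\infty \exp\!\Bigl(-\int_x^s a_m(r)\,dr\Bigr)\,ds \leq \int_x^\infty e^{-(s-x)/c}\,ds = c,
\end{equation*}
where the representation $u_m'(s)/u_m'(x) = \exp(-\int_x^s a_m(r)dr)$ comes from integrating $(\log u_m')' = -a_m$, and the lower bound $a_m \geq 1/c$ from~\eqref{eq:1} gives the exponential decay. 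This makes $\rho_0$ automatically bounded under Assumption~\ref{as:1} alone, so the $k=0,1$ ratios contribute no new constraint.

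The bookkeeping in the polynomial identities connecting $\rho_k$, $a_m^{(j)}$, and $t_m^{(j)}$ is the only nontrivial step, but it is purely mechanical induction once the recursion $\rho_{k+1} = \rho_k' + \rho_2\rho_k$ (and the analogous one for $t_m^{(j)}$) is in hand. The main qualitative point is that the two-sided bound on $a_m$ from Assumption~\ref{as:1} lets us freely divide by $a_m$ or $u_m'$ throughout, so the equivalences reduce to algebraic translations between three equivalent languages for describing the same quantitative information.
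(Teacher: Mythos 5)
Your proposal is correct and fills in exactly the ``direct computations'' that the paper's one-line proof leaves to the reader: the Leibniz-rule induction on $a_m t_m=1$ for the equivalence with \eqref{eq:29}, the recursion $\rho_{k+1}=\rho_k'+\rho_2\rho_k$ with $\rho_2=-a_m$ for the derivative ratios, and the integral bound $\abs{u_m(x)}/u_m'(x)\leq c$ (which mirrors \eqref{eq:5}) to dispose of the $k=0$ case. Nothing further is needed.
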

\begin{proof}
  Follows from Assumption~\ref{as:1} by direct computations.
\end{proof}

\begin{Lemma}
  \label{lem:8}
  Let Assumption~\ref{as:1} hold and suppose that each $u_m$,
  $m=1,\ldots,M$, is of class $\mathbf{C}^{l+2}$ and
  that~\eqref{eq:11} holds for an integer $l\geq 0$. Then the function
  $r=r(v,x)$ is of class $\mathbf{C}^{l+2}$ and there is a constant
  $b>0$ such that for every multi-index $\beta=(\beta_1,\dots,
  \beta_M,\beta_{M+1})$ of non-negative integers with $\abs{\beta}\leq
  l+2$
  \begin{equation}
    \label{eq:30}
    \abs{\mathbb{T}^\beta r(v,x)} \leq b \frac{\partial
      r}{\partial x}(v,x), \quad (v,x)\in (0,\infty)^M\times\mathbf{R},
  \end{equation}
  where $\mathbb{T}^{\beta}$ is the differential operator
  \begin{displaymath}
    \mathbb{T}^{\beta} \set 
    \left(\prod_{m=1}^{M}(v^m \frac{\partial}{\partial v^m})^{\beta_m} \right)
    \frac{\partial^{\beta_{M+1}}}{\partial x^{\beta_{M+1}}}.  
  \end{displaymath}
\end{Lemma}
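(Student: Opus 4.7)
The plan is to derive an explicit structural representation for the derivatives $\mathbb{T}^\beta r$ and then exploit the uniform boundedness of the risk tolerances $t_m = 1/a_m$ and of the normalised utility derivatives $\rho_m^{[k]}(x) \set u_m^{(k)}(x)/u_m'(x)$ provided by Assumption~\ref{as:1} together with Lemma~\ref{lem:7}.

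First I would establish $r\in \mathbf{C}^{l+2}$ and record the basic derivative formulas. Strict concavity of each $u_m$ forces the maximizer $\hat{x}(v,x) = (\hat{x}^m(v,x))_m$ in~\eqref{eq:2} to be uniquely characterised by
\begin{displaymath}
  v^m u_m'(\hat{x}^m(v,x)) = \lambda(v,x), \quad m=1,\dots,M, \qquad \sum_{m=1}^M \hat{x}^m(v,x) = x,
\end{displaymath}
for a unique common multiplier $\lambda(v,x) > 0$. The Jacobian of this system in $(\hat{x},\lambda)$ is nondegenerate, so the implicit function theorem yields $\hat{x},\lambda\in\mathbf{C}^{l+1}$; the envelope identities $\partial r/\partial x = \lambda$ and $\partial r/\partial v^k = u_k(\hat{x}^k)$ then give $r\in\mathbf{C}^{l+2}$. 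Writing $T \set \sum_m t_m(\hat{x}^m)$ for the aggregate risk tolerance and using $u_m'' = -a_m u_m'$, differentiation of the first-order conditions produces the key identities
\begin{align*}
  \frac{\partial \hat{x}^m}{\partial x} &= \frac{t_m(\hat{x}^m)}{T},
  & v^k\frac{\partial \hat{x}^m}{\partial v^k} &= t_m(\hat{x}^m)\left(\delta_{mk} - \frac{t_k(\hat{x}^k)}{T}\right), \\
  \frac{\partial \lambda}{\partial x} &= -\frac{\lambda}{T},
  & v^k\frac{\partial \lambda}{\partial v^k} &= \frac{t_k(\hat{x}^k)\,\lambda}{T}.
\end{align*}
Assumption~\ref{as:1} forces $t_m\in[1/c,c]$ and hence $T\in[M/c,Mc]$, so every factor on the right-hand sides is uniformly bounded on $(0,\infty)^M\times\mathbf{R}$.

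Next I would introduce an algebra of bounded expressions on $(0,\infty)^M\times\mathbf{R}$ generated by $1/T$, by $t_m^{(j)}(\hat{x}^m)$ and by $\rho_m^{[k]}(\hat{x}^m)$, keeping track of the maximal derivative orders $j$ and $k$ that appear. By Lemma~\ref{lem:7} and hypothesis~\eqref{eq:11} each generator is uniformly bounded provided $j\leq l$ and $k\leq l+2$. Combining the identities of the previous paragraph with the chain-rule formula $(\rho_m^{[k]})' = \rho_m^{[k+1]} + a_m\rho_m^{[k]} = \rho_m^{[k+1]} - \rho_m^{[2]}\rho_m^{[k]}$, one verifies the closure property: applying either $\partial/\partial x$ or $v^k\partial/\partial v^k$ to $\lambda$ times such an admissible expression produces $\lambda$ times a new admissible expression, while raising the maximal $t$-order by at most one and the maximal $\rho$-order by at most one (once $\rho$ has appeared at all).

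The conclusion then follows by induction on $|\beta|$. The base case $|\beta|=0$ is precisely~\eqref{eq:5}; the cases $|\beta|=1$ follow from $\partial r/\partial x = \lambda$ and $v^m\partial r/\partial v^m = \lambda\,\rho_m^{[0]}(\hat{x}^m)$. An induction using the closure property then yields, for every $1\leq|\beta|\leq l+2$,
\begin{displaymath}
  \mathbb{T}^\beta r(v,x) = \lambda(v,x)\,\Phi_\beta(v,x),
\end{displaymath}
where $\Phi_\beta$ is admissible with maximal $t$-order at most $|\beta|-2 \leq l$ and maximal $\rho$-order at most $|\beta| \leq l+2$. Both bounds fall inside the ranges controlled by~\eqref{eq:11}, so $\Phi_\beta$ is uniformly bounded by a constant $b$ depending only on $c$, $M$ and the sup-norms in~\eqref{eq:11}. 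Since $\lambda = \partial r/\partial x$, this is exactly~\eqref{eq:30}. The main obstacle — and the step I would write out most carefully — is the sharp derivative-counting behind the closure property, namely that $|\beta|$ applications of the first-order operators to $r$ never demand more than $l$ derivatives of any $t_m$ or more than $l+2$ derivatives of any $u_m$; these counts coincide exactly with the regularity provided by the hypothesis.
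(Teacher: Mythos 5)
Your proof is correct and follows essentially the same route as the paper's: both reduce \eqref{eq:30} to a structural representation $\mathbb{T}^\beta r = \frac{\partial r}{\partial x}(v,x)\,\Phi_\beta(v,x)$ in which $\Phi_\beta$ is built from $1/\sum_m t_m(\widehat x^m)$, the derivatives $t_m^{(n)}(\widehat x^m)$ with $n\leq \abs{\beta}-2$, and normalised utility derivatives, all uniformly bounded by~\eqref{eq:11} via Lemma~\ref{lem:7} and Assumption~\ref{as:1}. The difference is only presentational: you derive the first-order-condition identities and organize the ``elementary computations'' as an explicit induction with a closure property, whereas the paper imports the base formulas for $\abs{\beta}\leq 2$ from Theorems~4.1 and~4.2 of \cite{BankKram:11a} and states the resulting polynomial form directly.
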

\begin{proof}
  If $\abs{\beta}\leq 2$, then the inequality \eqref{eq:30} follows
  from Theorems~4.1 and~4.2 in \cite{BankKram:11a} containing explicit
  expressions for $\mathbb{T}^{\beta} r(v,x)$ in this case. If $2 <
  \abs{\beta} \leq l+2$, then elementary computations based on the
  formulas from Theorem~4.2 in \cite{BankKram:11a} yield
  \begin{displaymath}
    \mathbb{T}^{\beta}r(v,x) = \frac{\partial r}{\partial x}(v,x)
    P_{\beta}(v,x) \frac1{(\sum_{m=1}^M t_m(\widehat x^m))^{2\abs{\beta}}}, 
  \end{displaymath}
  where $(\widehat{x}^m)_{m=1,\dots,M}$ is the maximizer
  in~\eqref{eq:2} and $P_{\beta} = P_{\beta}(v,x)$ is a polynomial of
  $t^{(n)}_m(\widehat x^m)$, $n=0,\dots,\abs{\beta}-2$, $m=1,\dots,
  M$. The inequality~\eqref{eq:30} now follows from \eqref{eq:29} and
  Assumption~\ref{as:1}.
\end{proof}

\begin{Lemma}
  \label{lem:9}
  Assume the conditions of Lemma~\ref{lem:8} and denote
  \begin{equation}
    \label{eq:31}
    \xi(a) \set r(v,\Sigma(x,q)), \quad a= (v,x,q)\in \mathbf{A}. 
  \end{equation}
  Then for every compact set $C\subset \mathbf{A}$ there are a
  constant $b>0$ and a compact set $D\subset \mathbf{A}$ containing
  $C$ such that
  \begin{displaymath}
    \norm{\xi}_{l+2,C} \leq b \norm{\xi}_{D} \set b \sup_{a\in D}
    \abs{\xi(a)}.  
  \end{displaymath}
\end{Lemma}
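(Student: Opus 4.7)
The plan is to estimate each partial derivative $\partial^\beta\xi(a)$ of order $|\beta|\leq l+2$ at a point $a\in C$ by the value $|\xi(a'(\omega))|$ at a suitably chosen nearby point $a'(\omega)\in D$. The three ingredients are: (i) the chain rule, reducing $\partial^\beta\xi$ to partial derivatives of $r$; (ii) Lemma~\ref{lem:8}, bounding those derivatives by $\partial r/\partial x$; and (iii) the estimate~\eqref{eq:5}, bounding $\partial r/\partial x$ by $|r|=|\xi|$. The only genuine obstacle is a polynomial factor in $|\psi(\omega)|$ produced by differentiating in $q$, which I will defeat by making $D$ slightly larger than $C$ in the $q$-direction and exploiting the exponential growth of $|r|$ as $x$ decreases.

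Writing $\beta=(\alpha_v,\alpha_x,\alpha_q)$ with $\alpha_v\in\mathbf{N}^M$, $\alpha_x\in\mathbf{N}$, $\alpha_q\in\mathbf{N}^J$, the chain rule for $\xi(v,x,q)=r(v,\Sigma_0+x+\langle q,\psi\rangle)$ gives
\[
\partial^\beta\xi(a)=\Big(\prod_{j=1}^J(\psi^j)^{\alpha_{q,j}}\Big)\,\partial_v^{\alpha_v}\partial_x^{\alpha_x+|\alpha_q|}r(v,\Sigma(x,q)).
\]
Let $K\subset(0,\infty)^M$ be the $v$-projection of $C$; it is compact, so $v^m\geq v_{\min}>0$ on $K$. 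Since the ordinary operators $\partial_{v^m}^k$ equal $v^{-k}$ times a polynomial of degree at most $k$ in the Euler operators $v^m\partial_{v^m}$, Lemma~\ref{lem:8} yields a deterministic $b_1>0$ with $|\partial_v^{\alpha_v}\partial_x^k r(v,\cdot)|\leq b_1\,\partial r/\partial x(v,\cdot)$ on $K\times\mathbf{R}$ for all $|\alpha_v|+k\leq l+2$. Invoking the right-hand inequality of~\eqref{eq:5} then produces a constant $b_2$, depending only on $C$ and the data of Assumption~\ref{as:1}, such that
\[
|\partial^\beta\xi(a)|\leq b_2\,(1+|\psi|)^{l+2}\,|\xi(a)|,\qquad a\in C,\;|\beta|\leq l+2.
\]

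To absorb the random factor $(1+|\psi|)^{l+2}$, pick $X,R>0$ with $C\subset K\times[-X,X]\times B_R$ and set $D:=K\times[-X,X]\times B_{R+1}$, which is compact and contains $C$. For $\omega$ with $\psi(\omega)\neq0$ and any $a=(v,x,q)\in C$, define $q'(\omega):=q-\psi(\omega)/|\psi(\omega)|$; then $|q'(\omega)|\leq R+1$, so $a'(\omega):=(v,x,q'(\omega))\in D$ and $\Sigma(x,q'(\omega))=\Sigma(x,q)-|\psi(\omega)|$. The left-hand inequality of~\eqref{eq:5}, rewritten as $\partial_x\log(-r)\leq-1/(cM)$ and integrated over an $x$-interval of length $|\psi(\omega)|$, gives
\[
\|\xi\|_D\geq|\xi(a'(\omega))|\geq e^{|\psi(\omega)|/(cM)}\,|\xi(a)|.
\]
Combining this with the previous display, and using that $t\mapsto(1+t)^{l+2}e^{-t/(cM)}$ is bounded on $[0,\infty)$, yields a deterministic $b>0$ with $\|\xi\|_{l+2,C}\leq b\,\|\xi\|_D$; on $\{\psi=0\}$ the claim is trivial since the polynomial factor is $1$. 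The essential point, and the reason one cannot take $D=C$, is that shifting along $x$ alone provides only a fixed exponential gain, whereas perturbing $q$ by a unit vector produces a $|\psi|$-sized shift in $\Sigma(x,q)$, matching the unbounded polynomial growth precisely.
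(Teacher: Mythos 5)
Your argument is correct and follows essentially the same route as the paper's: both reduce, via the chain rule and Lemma~\ref{lem:8}, to a bound of the form $\norm{\xi}_{l+2,C}\le b\,\sup_{C}\frac{\partial r}{\partial x}(v,\Sigma(x,q))(1+\abs{\psi})^{l+2}$ and then absorb the polynomial factor by perturbing $q$ one unit in the direction of $\psi$, so that the exponential decay of $r$ in $x$ on a slightly larger compact set dominates $(1+\abs{\psi})^{l+2}$. The only (harmless) deviation is in the final reduction to the sup-norm: you use~\eqref{eq:5} to pass from $\frac{\partial r}{\partial x}$ directly to $\abs{r}=\abs{\xi}$, whereas the paper first dominates everything by $\norm{\xi}_{1,E}$ via~\eqref{eq:4} and then invokes the saddle-function property of $\xi$ to control $\norm{\xi}_{1,E}$ by $\norm{\xi}_{D}$.
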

\begin{proof}
  Lemma~\ref{lem:8} implies the existence of a constant $b>0$ such
  that
  \begin{equation}
    \label{eq:32}
    \norm{\xi}_{l+2,C} \leq b \sup_{(v,x,q)\in C} \frac{\partial
      r}{\partial x}(v,\Sigma(x,q)) (1 + \abs{\psi})^{l+2}. 
  \end{equation}
  In view of~\eqref{eq:4} the right side of~\eqref{eq:32} is dominated
  by $b_1\norm{\xi}_{1,E}$ for some constant $b_1>0$ and a compact set
  $E$ in $\mathbf{A}$ containing $C$. Since the sample paths of $\xi =
  \xi(a)$ are saddle functions, $\norm{\xi}_{1,E}$ is dominated by
  $b_2\norm{\xi}_{D}$, where $D\subset \mathbf{A}$ is a compact set
  whose interior contains $E$ and $b_2 = b_2(E,D)$ is a positive
  constant depending only on $E$ and $D$.
\end{proof}

The proof of Theorem~\ref{th:1} now follows from Lemma~\ref{lem:3} and

\begin{Lemma}
  \label{lem:10}
  Under the assumptions of Theorem~\ref{th:1} the process $H$
  of~\eqref{eq:8} has values in $\mathbf{C}^2(\mathbf{A},
  \mathbf{R}^d)$ and~\eqref{eq:22} holds for every compact set
  $C\subset \mathbf{A}$.
\end{Lemma}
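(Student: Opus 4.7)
The plan is to verify the hypotheses of Lemma~\ref{lem:6} for the random field $\xi$ of~\eqref{eq:31}, so as to obtain both the required smoothness of $H$ and~\eqref{eq:25}. The crucial bookkeeping is dimensional: by Remark~\ref{rem:1} and the homogeneity $H(bv,x,q) = bH(v,x,q)$, it suffices to establish everything on compact sets inside $\widetilde{\mathbf{A}} = \mathbf{S}^M \times \mathbf{R} \times \mathbf{R}^J$, which has dimension $n = M + J$ rather than $M+1+J$; this is what aligns the Sobolev embedding condition of Lemma~\ref{lem:6} with the assumption $l>(M+J)/2$ of Theorem~\ref{th:1}.

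For the smoothness of $\xi$, the hypothesis $u_m \in \mathbf{C}^{l+2}$ together with~\eqref{eq:11} yields via Lemma~\ref{lem:8} that $r \in \mathbf{C}^{l+2}$, so $\xi$ has sample paths in $\mathbf{C}^{l+2}(\mathbf{A})$. For any compact $C \subset \widetilde{\mathbf{A}}$, Lemma~\ref{lem:9} supplies a compact $D \supset C$ and a constant $b>0$ such that $\|\xi\|_{l+2, C} \leq b\,\sup_{a \in D}|\xi(a)|$. The integrability requirement of Lemma~\ref{lem:6} thus reduces to showing $\mathbb{E}[\sup_{a \in D}|\xi(a)|] < \infty$.

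The main obstacle is this last estimate, since Theorem~\ref{th:1} assumes only the pointwise integrability~\eqref{eq:6} and not the stronger exponential moment condition~\eqref{eq:7}. The strategy is as follows. Since the $u_m$ are negative, $r(\cdot, y)$ is non-increasing in $v$ coordinatewise; taking $v_*$ to be the coordinatewise supremum of the $v$-projection of $D$ and using $|r| \leq cM\,\partial r/\partial x$ from~\eqref{eq:5} gives $\sup_{a\in D}|\xi(a)| \leq cM \sup_{(x,q)} \frac{\partial r}{\partial x}(v_*, \Sigma(x,q))$. An application of~\eqref{eq:4} in the $x$-variable, together with the boundedness of $x$ and $q$ on $D$, then majorizes this by $C\,\frac{\partial r}{\partial x}(v_*, \Sigma_0)\,e^{p|\psi|}$ for constants $C,p>0$ depending on $D$. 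The exponential factor is handled via $e^{p|\psi|} \leq C'\sum_{k \in \{\pm 1\}^J} e^{p\langle k, \psi\rangle}$ (norm equivalence). For each sign vector $k$, a second use of~\eqref{eq:4} with displacement $y = -cMp\langle k, \psi\rangle$ yields on $\{\langle k, \psi\rangle \geq 0\}$ the bound $\frac{\partial r}{\partial x}(v_*, \Sigma_0)\,e^{p\langle k, \psi\rangle} \leq \frac{\partial r}{\partial x}(v_*, \Sigma(0, -cMp\,k))$, which is integrable by~\eqref{eq:5} and~\eqref{eq:6}; on the complement the exponential is at most $1$ and the bound reduces to the already integrable $\frac{\partial r}{\partial x}(v_*, \Sigma_0)$.

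With $\mathbb{E}[\|\xi\|_{l+2, C}] < \infty$ in hand, Lemma~\ref{lem:6} applies on $\widetilde{\mathbf{A}}$ with $m = l+2$, $j = 2$, and $n = M+J$: the inequality $m - n/2 > j$ becomes precisely $l > (M+J)/2$, exactly the hypothesis on $l$. The lemma produces a predictable process taking values in $\mathbf{C}^2(\widetilde{\mathbf{A}}, \mathbf{R}^d)$ whose derivatives up to order $2$ match the Brownian representation~\eqref{eq:8} for $\partial^\beta F$, $|\beta|\leq 2$, and for which $\int_0^1 \|H_t\|^2_{2, C} dt < \infty$ on every compact $C \subset \widetilde{\mathbf{A}}$. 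Extending to compact subsets of $\mathbf{A}$ via the homogeneity $H(v, x, q) = |v|_1 H(v/|v|_1, x, q)$ from Remark~\ref{rem:1} completes the verification.
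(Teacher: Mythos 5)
Your proposal is correct and follows essentially the same route as the paper: reduce to compact subsets of $\widetilde{\mathbf{A}}$ (dimension $M+J$) via Remark~\ref{rem:1}, apply Lemma~\ref{lem:6} with $m=l+2$, $j=2$, $n=M+J$, and verify $\mathbb{E}[\norm{\xi}_{l+2,C}]<\infty$ through Lemmas~\ref{lem:8} and~\ref{lem:9}. The only divergence is that where the paper simply cites Lemma~4.12 of \cite{BankKram:11a} for the integrability of $\norm{\xi}_{D}$ under~\eqref{eq:6}, you supply a self-contained (and correct) derivation via monotonicity of $r$ in $v$, the estimates~\eqref{eq:4}--\eqref{eq:5}, and the sign-vector decomposition of $e^{p\abs{\psi}}$.
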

\begin{proof}
  Recall the notation $\widetilde{\mathbf{A}}$ from Remark~\ref{rem:1}
  and observe that the dimension of this set is $M+J$.  In view of
  Lemma~\ref{lem:6} and Remark~\ref{rem:1} it is enough to show that
  for every compact set $C\subset \widetilde{\mathbf{A}}$ and an
  integer $l>(M+J)/2$ the random field $\xi$ of~\eqref{eq:31}
  satisfies
  \begin{displaymath}
    \mathbb{E}[\norm{\xi}_{l+2,C}] < \infty.
  \end{displaymath}
  This follows from Lemma~\ref{lem:9} and the fact that~\eqref{eq:6}
  implies the integrability of $\norm{\xi}_{D}$ for every compact set
  $D\subset \mathbf{A}$; see Lemma~4.12 in \cite{BankKram:11a}.
\end{proof}

\subsection{Proof of Theorem~\ref{th:2}}
\label{sec:proof-theor-refth:2}

It is enough to verify the growth condition~\eqref{eq:26} of
Lemma~\ref{lem:4}.  For $q\in \mathbf{R}^J$ define the processes
$\widetilde F(q)$ and $\widetilde H(q)$ by
\begin{displaymath}
  \widetilde F_t(q) \set \mathbb{E}[e^{-a(\Sigma_0 +
    \ip{q}{\psi})}|\mathcal{F}_t] = \widetilde F_0(q) + \int_0^t \widetilde
  H_s(q) dB_s,
\end{displaymath}
with the constant $a>0$ from~\eqref{eq:14}, and observe that,
by~\eqref{eq:13},
\begin{align*}
  F_t(v,x,q) & = r(v,x) \widetilde F_t(q), \\
  H_t(v,x,q) & = r(v,x) \widetilde H_t(q).
\end{align*}
It follows that \eqref{eq:26} holds if for every $b>0$
\begin{displaymath}
  \int_0^1 \sup_{\abs{q}\leq b} \left(\frac{\abs{\widetilde H_t(q)}}{\widetilde
      F_t(q)}\right)^2 dt < \infty.
\end{displaymath}
This inequality holds since
\begin{displaymath}
  \inf_{t\in [0,1]}\inf_{\abs{q}\leq b} \widetilde F_t(q) \geq  \inf_{t\in [0,1]}
  \mathbb{E}[\inf_{\abs{q}\leq b}e^{-a(\Sigma_0 + \ip{q}{\psi})}|\mathcal{F}_t] =
  \inf_{t\in [0,1]}\mathbb{E}[e^{-a(\Sigma_0 + 
    b\abs{\psi})}|\mathcal{F}_t] >0,
\end{displaymath}
and because, in view of Lemma~\ref{lem:10},
\begin{displaymath}
  \int_0^1 \sup_{\abs{q}\leq b} \abs{\widetilde H_t(q)}^2 dt  <\infty.  
\end{displaymath}

This ends the proof of Theorem~\ref{th:2}.

\subsection{Proof of Theorem~\ref{th:3}}
\label{sec:proof-theor-refth:3}

The proof is divided into a series of lemmas, where we shall verify
the assumptions of Lemma~\ref{lem:5}.  Hereafter we denote
\begin{displaymath}
  \xi(a) \set r(v,\Sigma(x,q)), \quad a = (v,x,q) \in \mathbf{A}.
\end{displaymath}
As usual $\mathbf{L}^p$ stands for the space of $p$-integrable random
variables, $p\geq 1$.

\begin{Lemma}
  \label{lem:11}
  Suppose Assumption~\ref{as:1} and conditions~\eqref{eq:16}
  and~\eqref{eq:17} hold. Then $\norm{\xi}_{3,C} \in \mathbf{L}^2$ for
  every compact set $C\subset \mathbf{A}$.
\end{Lemma}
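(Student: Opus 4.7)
The plan is to reduce the Sobolev-type norm $\|\xi\|_{3,C}$ to a sup-norm by invoking Lemma~\ref{lem:9}, and then to bound that sup-norm pointwise by an $\mathbf{L}^2$ random variable using the exponential estimates~\eqref{eq:4}--\eqref{eq:5} together with the integrability hypothesis~\eqref{eq:17}.

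First, note that the hypotheses of the present lemma correspond exactly to the $l=1$ instance of Lemmas~\ref{lem:8}--\ref{lem:9}: we have $u_m\in \mathbf{C}^3=\mathbf{C}^{l+2}$, and condition~\eqref{eq:16} is precisely~\eqref{eq:11} for $l=1$. Applying Lemma~\ref{lem:9} therefore produces a constant $b>0$ and a compact set $D\subset \mathbf{A}$ containing $C$ such that
\begin{displaymath}
  \norm{\xi}_{3,C} \leq b\,\norm{\xi}_D = b\sup_{(v,x,q)\in D}\abs{r(v,\Sigma(x,q))}.
\end{displaymath}
Thus it suffices to show that the right-hand side belongs to $\mathbf{L}^2$.

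Next, combining~\eqref{eq:5} with~\eqref{eq:4} (using $e^{-y^+/(cM)}\leq 1$) gives for any $y\in \mathbf{R}$ and $v\in(0,\infty)^M$
\begin{displaymath}
  \abs{r(v,y)} \;\leq\; \frac{M}{c}\,\frac{\partial r}{\partial x}(v,y)
  \;\leq\; \frac{M}{c}\,e^{y^- c/M}\,\frac{\partial r}{\partial x}(v,0).
\end{displaymath}
Applying this with $y=\Sigma(x,q)=\Sigma_0+x+\ip{q}{\psi}$ and using compactness of $D$ to bound $|x|$, $|q|$, and $v$, one finds constants $C_1,C_2>0$ (depending only on $D$ and on $\sup_{v}\frac{\partial r}{\partial x}(v,0)$ over the compact $v$-projection of $D$) such that
\begin{displaymath}
  \sup_{(v,x,q)\in D}\abs{r(v,\Sigma(x,q))} \;\leq\; C_2\, e^{c\Sigma_0^-/M+C_1\abs{\psi}},
\end{displaymath}
because $\Sigma(x,q)^- \leq \Sigma_0^-+\abs{x}+\abs{q}\abs{\psi}$ is controlled by $\Sigma_0^-$ plus a bounded multiple of $(1+\abs{\psi})$.

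Squaring this bound and taking expectations, the assumption~\eqref{eq:17} applied with $p:=2C_1$ yields $\mathbb{E}[\sup_{a\in D}\abs{\xi(a)}^2]<\infty$, hence $\norm{\xi}_{3,C}\in \mathbf{L}^2$. The only non-routine step is ensuring that the exponential growth in $\psi$ remains \emph{linear} in $|\psi|$ after all the estimates are combined (so that~\eqref{eq:17}, which offers moments of every order in $\abs{\psi}$, applies); this follows because~\eqref{eq:4} contributes $e^{y^-c/M}$ and $y^-$ grows at most linearly in $|\psi|$ on the compact set $D$.
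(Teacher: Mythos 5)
Your proof is correct and follows essentially the same route as the paper's: reduce $\norm{\xi}_{3,C}$ to a sup-norm via Lemma~\ref{lem:9} (whose hypotheses for $l=1$ are exactly~\eqref{eq:16}), dominate that sup-norm by a constant multiple of $e^{b\abs{\psi}+c\Sigma_0^-/M}$ using~\eqref{eq:4}--\eqref{eq:5}, and conclude by squaring and invoking~\eqref{eq:17}. The only blemish is a harmless constant slip: from~\eqref{eq:5} the upper bound is $\abs{r(v,y)}\leq cM\,\frac{\partial r}{\partial x}(v,y)$ rather than $\frac{M}{c}\frac{\partial r}{\partial x}(v,y)$, which does not affect the argument.
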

\begin{proof}
  Lemma~\ref{lem:9} and our boundedness assumptions on $a_m$ and
  $a'_m$ yield that $ \norm{\xi}_{3,C} \leq b_1 \norm{\xi}_{E}$
  for some constant $b_1>0$ and a compact set $E\subset \mathbf{A}$
  containing $C$. From~\eqref{eq:4} and~\eqref{eq:5} we deduce the
  existence of a constant $b_2>0$ such that
  \begin{displaymath}
    \norm{\xi}_{E} \leq b_2 e^{ b_2\abs{\psi} + c\Sigma_0^{-}/M}
  \end{displaymath}
  and the result follows from~\eqref{eq:17}.
\end{proof}

\begin{Lemma}
  \label{lem:12}
  Under the conditions of Theorem~\ref{th:3} we have $\xi(a) \in
  \mathbf{D}^{1,1}$, $a \in \mathbf{A}$, with the Malliavin derivative
  \begin{displaymath}
    D \xi(a) = \frac{\partial r}{\partial
      x}(v,\Sigma(x,q))(D \Sigma_0 + \ip{q}{D \psi})
  \end{displaymath}
  and for every compact set $C\subset \mathbf{A}$
  \begin{equation}
    \label{eq:33}
    \mathbb{E}\left[\left( \int_0^1 \norm{D_t \xi}^2_{2,C} dt
      \right)^{1/2}\right] < \infty.
  \end{equation}
\end{Lemma}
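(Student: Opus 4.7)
The plan is to establish the chain-rule formula for the Malliavin derivative of $\xi(a)$ via truncation, and then to combine Lemma~\ref{lem:8} with the pointwise bound~\eqref{eq:4} and the integrability~\eqref{eq:17} to control the $\mathbf{C}^2$-norm of $D\xi$. Fix $a=(v,x,q)\in\mathbf{A}$. Since $\Sigma_0,\psi^j\in\mathbf{D}^{1,2}$, we have $\Sigma(x,q)=\Sigma_0+x+\ip{q}{\psi}\in\mathbf{D}^{1,2}$ with $D\Sigma(x,q)=D\Sigma_0+\ip{q}{D\psi}$. Because $r(v,\cdot)$ is $\mathbf{C}^3$ but has exponentially growing derivatives rather than being globally Lipschitz, I would approximate $r(v,\cdot)$ by functions $r_n(v,\cdot)$ that agree with $r(v,\cdot)$ on $[-n,n]$ and are extended linearly outside; each $r_n(v,\cdot)$ is Lipschitz, so Proposition~1.2.3 in~\cite{Nualart:06} gives $r_n(v,\Sigma(x,q))\in\mathbf{D}^{1,2}$ with derivative $\frac{\partial r_n}{\partial y}(v,\Sigma(x,q))D\Sigma(x,q)$. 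Using~\eqref{eq:4},~\eqref{eq:5}, the exponential moment~\eqref{eq:17}, and $\Sigma_0,\psi\in\mathbf{D}^{1,2}$, dominated convergence yields $r_n(v,\Sigma(x,q))\to\xi(a)$ in $\mathbf{L}^1$ and its Malliavin derivative converges in $\mathbf{L}^1(\Omega;\mathbf{L}^2([0,1];\mathbf{R}^d))$. The closability of $D$ on $\mathbf{D}^{1,1}$ then delivers $\xi(a)\in\mathbf{D}^{1,1}$ with the claimed derivative.

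To bound $\norm{D_t\xi}_{2,C}$, I would differentiate the formula $D\xi(a)=\frac{\partial r}{\partial x}(v,\Sigma(x,q))(D\Sigma_0+\ip{q}{D\psi})$ in $a=(v,x,q)$ up to second order. Every resulting term is a product of a partial derivative of $r$ at $(v,\Sigma(x,q))$ of total order at most $3$ in the $(v,x)$-variables, a polynomial in $\psi$ of degree at most $2$ (arising from differentiation in $q$), and one of $D_t\Sigma_0$ or $D_t\psi^j$. Since $u_m\in\mathbf{C}^3$ and~\eqref{eq:16} is~\eqref{eq:11} for $l=1$, Lemma~\ref{lem:8} bounds every such partial derivative of $r$ by a constant multiple (uniform on compacts in $v$) of $\frac{\partial r}{\partial x}(v,\Sigma(x,q))$. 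A further application of~\eqref{eq:4} gives $\frac{\partial r}{\partial x}(v,\Sigma(x,q))\leq \frac{\partial r}{\partial x}(v,x)\exp(c\Sigma_0^-/M+c\abs{q}\abs{\psi}/M)$. Absorbing the polynomial-in-$\psi$ factors into the exponential, I obtain constants $b_1,b_2>0$ depending on $C$ for which
\begin{displaymath}
  \norm{D_t\xi}_{2,C}\leq b_1\,\exp\rbr{b_2\abs{\psi}+c\Sigma_0^-/M}\rbr{\abs{D_t\Sigma_0}+\abs{D_t\psi}}.
\end{displaymath}

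The exponential prefactor does not depend on $t$, so it factors out of the $\mathbf{L}^2([0,1])$-norm, and Cauchy--Schwarz produces
\begin{displaymath}
  \mathbb{E}\cbr{\rbr{\int_0^1\norm{D_t\xi}_{2,C}^2\,dt}^{1/2}}\leq b_1\,\mathbb{E}\cbr{e^{2b_2\abs{\psi}+2c\Sigma_0^-/M}}^{1/2}\mathbb{E}\cbr{\int_0^1\rbr{\abs{D_t\Sigma_0}^2+\abs{D_t\psi}^2}dt}^{1/2},
\end{displaymath}
where the first expectation is finite by~\eqref{eq:17} (with $p=2b_2$) and the second by $\Sigma_0,\psi\in\mathbf{D}^{1,2}$, which is~\eqref{eq:33}.

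The main obstacle lies in the chain-rule step: $r(v,\cdot)$ fails to be globally Lipschitz, so the standard Malliavin chain rule does not apply directly and the truncation must be closed via a limiting argument. The exponential integrability built into~\eqref{eq:17} is precisely what makes this limiting work in the $\mathbf{D}^{1,1}$ sense; the remaining bookkeeping in the $\mathbf{C}^2$-bound is routine once Lemma~\ref{lem:8} is in hand, reducing everything to a single Cauchy--Schwarz estimate.
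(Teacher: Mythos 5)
Your proposal is correct and follows essentially the same route as the paper: approximate by a Lipschitz truncation so that the Malliavin chain rule applies, pass to the limit using the closability of $D$ in $\mathbf{D}^{1,1}$, and obtain~\eqref{eq:33} from Lemma~\ref{lem:8}, \eqref{eq:4}, and a Cauchy--Schwarz estimate that consumes exactly the factor $2$ in~\eqref{eq:17}. The only cosmetic differences are that the paper truncates the \emph{range} of $r$ (composing with $f_n$ equal to the identity on $(-n,0)$, which by~\eqref{eq:5} makes $f_n(r(v,\cdot))$ automatically Lipschitz and gives the $\mathbf{L}^1$ convergence for free from $\xi\leq \xi_n<0$, whereas your linear extension of the domain needs $\mathbb{E}[\Sigma(x,q)^+]<\infty$, available from $\mathbf{D}^{1,2}\subset\mathbf{L}^2$, to dominate the positive linear tail), and that the paper packages your final pointwise bound as the statement $\norm{\xi}_{3,C}\in\mathbf{L}^2$ of the separate Lemma~\ref{lem:11}.
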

\begin{proof}
  Let $(f_n)_{n\geq 1}$ be a sequence of continuously differentiable
  functions on $(-\infty,0)$ such that $f_n(x) = x$ on $(-n,0)$ and
  $0\leq f'_n(x)\leq -n/x$ on $(-\infty,-n]$. For example, we can take
  \begin{displaymath}
    f_n(x) = x \ind{-n< x<0} - (2n+\frac{n^2}{x}) \ind{x \leq -n}. 
  \end{displaymath}
  The function $f_n(r(v,.))$ is continuously differentiable and, in
  view of~\eqref{eq:5}, has bounded derivatives.  By the chain rule of
  Malliavin calculus, see Proposition~1.2.3 in \cite{Nualart:06},
  $\xi_n(a) \set f_n(\xi(a))$ belongs to $\mathbf{D}^{1,2}$ with
  the Malliavin derivative $D \xi_n(a) = f'_n(\xi(a)) Z(a)$, where
  \begin{displaymath}
    Z_t(a) \set \frac{\partial r}{\partial
      x}(v,\Sigma(x,q))(D_t \Sigma_0 + \ip{q}{D_t \psi}), \quad t\in
    [0,1].
  \end{displaymath}
  
  By construction, $\xi(a)\leq \xi_n(a)<0$ and $\xi_n(a) \to \xi(a)$
  almost surely. This readily implies the convergence $\xi_n(a) \to
  \xi(a)$ in $\mathbf{L}^1$. Since $0\leq f'_n(x)\leq 1$ and
  $f_n'(x)\to 1$, the convergence $\xi_n(a) \to \xi(a)$ in
  $\mathbf{D}^{1,1}$ and the identity $D\xi(a) = Z(a)$ follow if we
  can show that for every compact set $C\subset \mathbf{A}$
  \begin{displaymath}
    \mathbb{E}\left[\left( \int_0^1 \norm{Z_t}^2_{2,C} dt
      \right)^{1/2}\right] < \infty.
  \end{displaymath}
  Note that this will also establish~\eqref{eq:33}.

  We have
  \begin{displaymath}
    \int_0^1  \norm{Z_t}^2_{2,C} dt \leq \norm{\xi}_{3,C}^2 \int_0^1
    (\abs{D_t \Sigma_0} + b \abs{D_t \psi} )^2 dt, 
  \end{displaymath}
  where $b \set \sup_{(v,x,q)\in C} \abs{q}$, and then, by the
  Cauchy-Schwarz inequality,
  \begin{align*}
    \mathbb{E}\left[\left( \int_0^1 \norm{Z_t}^2_{2,C} dt
      \right)^{1/2}\right] & \leq \mathbb{E}\left[\norm{\xi}_{3,C}
      \left( \int_0^1 (\abs{D_t \Sigma_0} + b \abs{D_t \psi} )^2 dt
      \right)^{1/2}\right] \\
    & \leq
    \left(\mathbb{E}\left[\norm{\xi}_{3,C}^2\right]\right)^{1/2}
    \left(\mathbb{E}\left[\int_0^1 (\abs{D_t \Sigma_0} + b \abs{D_t
          \psi} )^2 dt\right]\right)^{1/2},
  \end{align*}
  which is finite because of Lemma~\ref{lem:11} and because
  $\Sigma_0,\psi\in \mathbf{D}^{1,2}$.
\end{proof}
 
\begin{Lemma}
  \label{lem:13}
  Under the conditions of Theorem~\ref{th:3} the process $H=H_t(a)$
  of~\eqref{eq:8} has values in $\mathbf{C}^2(\mathbf{A},
  \mathbf{R}^d)$ and~\eqref{eq:22} holds for every compact set
  $C\subset \mathbf{A}$. Moreover, for every multi-index $\beta =
  (\beta^1, \dots, \beta^{J+M+1})$ with $\abs{\beta}\leq 2$
  \begin{equation}
    \label{eq:34}
    \partial^\beta H_t(a) = \mathbb{E}\left[\left.\partial^\beta \left(
          \frac{\partial r}{\partial 
            x}(v,\Sigma(x,q))(D_t \Sigma_0 + \ip{q}{D_t
            \psi})\right)\right|\mathcal{F}_t\right], \; t\in [0,1]. 
  \end{equation}
\end{Lemma}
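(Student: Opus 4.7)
The strategy is to prove the representation~\eqref{eq:34} first, via the Clark--Ocone formula for $\mathbf{D}^{1,1}$ of \cite{KaratOconeLi:91}, and then to read off both the $\mathbf{C}^2$-regularity of $H$ in $a$ and the integrability~\eqref{eq:25} from the explicit form of $D\xi$ supplied by Lemma~\ref{lem:12}. The first step is to adapt the mollification argument of Lemma~\ref{lem:12} so as to show that $\partial^\beta\xi(a)\in\mathbf{D}^{1,1}$ for every $a\in\mathbf{A}$ and every multi-index $\beta$ with $\abs{\beta}\leq 2$, together with the commutation identity $D_t\partial^\beta\xi(a)=\partial^\beta D_t\xi(a)$. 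The truncation has to be applied to each term in the Leibniz expansion of $\partial^\beta\xi(a)$, using the structural bound~\eqref{eq:30} of Lemma~\ref{lem:8} (available since $u_m\in\mathbf{C}^3$) to dominate every arising derivative of $r$ by $\partial r/\partial x$. Once this is done, applying Clark--Ocone to $\partial^\beta\xi(a)$ identifies the integrand in the representation of $\partial^\beta F_t(a)$ as $\mathbb{E}[\partial^\beta D_t\xi(a)\mid\mathcal{F}_t]$, which is precisely~\eqref{eq:34}.

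The second step turns~\eqref{eq:34} into the $\mathbf{C}^2$-regularity and into~\eqref{eq:25}. Differentiating the product $D_t\xi(a)=(\partial r/\partial x)(v,\Sigma(x,q))\,(D_t\Sigma_0+\ip{q}{D_t\psi})$, which is linear in the Malliavin derivatives and smooth in $a$ through the $\mathbf{C}^3$-function $r$, and invoking Lemma~\ref{lem:9} together with the exponential bounds~\eqref{eq:4}--\eqref{eq:5}, I obtain on any compact $C\subset\mathbf{A}$ a uniform domination
\begin{displaymath}
  \abs{\partial^\beta D_t\xi(a)}\;\leq\; M\,\bigl(\abs{D_t\Sigma_0}+b\abs{D_t\psi}\bigr),\qquad a\in C,\ \abs{\beta}\leq 2,
\end{displaymath}
for some constant $b=b(C)>0$ and a nonnegative random variable $M$ independent of $a,t$ with $M\in\mathbf{L}^2(\Omega)$ (the latter via Lemma~\ref{lem:11}). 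Dominated convergence then passes parameter derivatives through the conditional expectation in~\eqref{eq:34}, yielding $H\in\mathbf{C}(\mathbf{C}^2(\mathbf{A}),[0,1])$. From the resulting pointwise bound $\norm{H_t}_{2,C}\leq c\,\mathbb{E}[M(\abs{D_t\Sigma_0}+b\abs{D_t\psi})\mid\mathcal{F}_t]$ and the conditional Cauchy--Schwarz inequality,
\begin{displaymath}
  \int_0^1\norm{H_t}^2_{2,C}\,dt\;\leq\; c^2\Bigl(\sup_{t\in[0,1]}\mathbb{E}[M^2\mid\mathcal{F}_t]\Bigr)\int_0^1\mathbb{E}\bigl[(\abs{D_t\Sigma_0}+b\abs{D_t\psi})^2\mid\mathcal{F}_t\bigr]\,dt,
\end{displaymath}
in which the first factor is almost surely finite by Doob's convergence (since $M^2\in\mathbf{L}^1$), while the second factor has finite expectation because $\Sigma_0,\psi\in\mathbf{D}^{1,2}$; this yields~\eqref{eq:25}.

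The main technical obstacle is the first step: rigorously establishing $\partial^\beta\xi(a)\in\mathbf{D}^{1,1}$ together with $D_t\partial^\beta\xi(a)=\partial^\beta D_t\xi(a)$ for $\abs{\beta}\leq 2$. The truncation scheme of Lemma~\ref{lem:12} must be applied simultaneously to every term of $\partial^\beta[r(v,\Sigma(x,q))]$, and the passage to the limit in $\mathbf{D}^{1,1}$ has to be controlled using the $\mathbf{L}^2$-integrability of $\norm{\xi}_{3,C}$ furnished by Lemma~\ref{lem:11} and the structural bound~\eqref{eq:30}. Once this is in place, the remainder of the proof is a direct unpacking of the resulting formulas, as outlined above.
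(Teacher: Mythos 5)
Your endpoint and your final estimates coincide with the paper's: Clark--Ocone for $\mathbf{D}^{1,1}$ from \cite{KaratOconeLi:91} applied to $\xi(a)$, the explicit form of $D\xi(a)$ from Lemma~\ref{lem:12}, and then the conditional Cauchy--Schwarz bound $\norm{H_t}^2_{2,C}\leq b\,\mathbb{E}[\norm{\xi}_{3,C}^2\mid\mathcal{F}_t]\,\mathbb{E}[(\abs{D_t\Sigma_0}+\abs{D_t\psi})^2\mid\mathcal{F}_t]$, combined with the a.s.\ boundedness of the paths of the closed martingale $\mathbb{E}[\norm{\xi}_{3,C}^2\mid\mathcal{F}_t]$ (Lemma~\ref{lem:11}) and with $\mathbb{E}\int_0^1(\abs{D_t\Sigma_0}^2+\abs{D_t\psi}^2)\,dt<\infty$, which is exactly how the paper obtains~\eqref{eq:25}. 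Where you diverge is the route to~\eqref{eq:34}: the paper applies Clark--Ocone \emph{once}, to $\xi(a)$ itself, rewrites the result as $H(a)=\widetilde{\mathbb{E}}[D\xi(a)\mid\mathcal{P}]$ on $(\Omega\times[0,1],\mathbb{P}\otimes dt)$, and then invokes Lemma~6.5 of \cite{BankKram:11b} --- a differentiation-under-conditional-expectation result for random fields whose hypothesis $\widetilde{\mathbb{E}}[\norm{D\xi}_{2,C}]<\infty$ is precisely what~\eqref{eq:33} of Lemma~\ref{lem:12} supplies --- to get the $\mathbf{C}^2$ version of $H$ and the identity~\eqref{eq:34} in one stroke. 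Your first step, proving $\partial^\beta\xi(a)\in\mathbf{D}^{1,1}$ together with $D\partial^\beta\xi(a)=\partial^\beta D\xi(a)$ for all $\abs{\beta}\leq 2$ and applying Clark--Ocone to each derivative separately, is exactly what this device is designed to avoid; you rightly flag it as the main technical obstacle, but it is also redundant, since your own second step (dominated convergence through the conditional expectation, with dominating variable $c\,\norm{\xi}_{3,C}(\abs{D_t\Sigma_0}+b\abs{D_t\psi})$, integrable for a.e.\ $t$ by Cauchy--Schwarz and Lemma~\ref{lem:11}) already yields~\eqref{eq:34} directly from $H_t(a)=\mathbb{E}[D_t\xi(a)\mid\mathcal{F}_t]$ without any Malliavin differentiability of $\partial^\beta\xi(a)$; the consistency with the $\mathcal{L}$-derivative meaning of $\partial^\beta H$ in~\eqref{eq:8} then follows by interchanging $\partial^\beta$ with the stochastic integral, which~\eqref{eq:25} licenses. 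I would drop the first step: carrying it out honestly (term-by-term truncations, product rules in $\mathbf{D}^{1,1}$ for terms such as $\frac{\partial^2 r}{\partial x^2}(v,\Sigma(x,q))\psi^i\psi^j$, closability of $D$ in the limit) costs far more than it buys, and as written it is the one place where your proposal is only a sketch. Two minor corrections: the lemma asserts only that the predictable process $H$ takes values in $\mathbf{C}^2(\mathbf{A},\mathbf{R}^d)$, so your conclusion $H\in\mathbf{C}(\mathbf{C}^2(\mathbf{A}),[0,1])$ overstates what holds (and what a $t$-by-$t$ dominated convergence argument can deliver); and the a.s.\ finiteness of $\sup_{t\in[0,1]}\mathbb{E}[M^2\mid\mathcal{F}_t]$ is justified by path-continuity of this closed martingale on the compact $[0,1]$ rather than by Doob's convergence theorem.
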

\begin{proof}
  Fix a compact set $C\subset \mathbf{A}$. Since, by
  Lemma~\ref{lem:12}, $\xi(a) \in \mathbf{D}^{1,1}$ the Clark-Ocone
  formula from \cite{KaratOconeLi:91} yields
  \begin{displaymath}
    H_t(a) = {\mathbb{E}}[ D_t \xi(a) |\mathcal{F}_t],
    \quad a\in \mathbf{A}, \; t\in [0,1], 
  \end{displaymath}
  or, equivalently,
  \begin{displaymath}
    H(a) \set \widetilde{\mathbb{E}}[ D \xi(a) |\mathcal{P}],
    \quad a\in \mathbf{A},
  \end{displaymath}
  where $\widetilde{\mathbb{P}} \set \mathbb{P} \otimes dt$ and
  $\mathcal{P}$ is the $\sigma$-algebra of predictable sets. From
  Lemma~\ref{lem:12} we deduce
  \begin{displaymath}
    \widetilde{\mathbb{E}}[\norm{D \xi}_{2,C}] \set
    \mathbb{E}[\int_0^1 \norm{D_t \xi}_{2,C} dt] \leq
    \mathbb{E}\left[\left( \int_0^1 \norm{D_t \xi}^2_{2,C} dt 
      \right)^{1/2}\right] < \infty.
  \end{displaymath}
  As a result, see Lemma~4.5 of \cite{BankKram:11b}, the random field
  $H=H(a)$ can be chosen with sample paths in
  $\mathbf{C}^2(\mathbf{A},\mathbf{R}^d)$ and, for a multi-index
  $\beta = (\beta^1, \dots, \beta^{J+M+1})$ with $\abs{\beta}\leq 2$,
  \begin{displaymath}
    \partial^\beta H(a) = \widetilde{\mathbb{E}}[ \partial^\beta D
    \xi(a) |\mathcal{P}], \quad a\in \mathbf{A},
  \end{displaymath}
  which is just a reformulation of~\eqref{eq:34}.

  It only remains to verify~\eqref{eq:22}. From~\eqref{eq:34} we
  obtain
  \begin{align*}
    \norm{H_t}^2_{2,C} &\leq
    b\left({\mathbb{E}}\left[\norm{\xi}_{3,C}(\abs{D_t\Sigma_0} +
        \abs{D_t\psi})|\mathcal{F}_t\right]\right)^2 \\
    & \leq b{\mathbb{E}}\left[\norm{\xi}_{3,C}^2 |\mathcal{F}_t\right]
    \mathbb{E}\left[(\abs{D_t\Sigma_0} +
      \abs{D_t\psi})^2|\mathcal{F}_t\right],
  \end{align*}
  for some constant $b$ depending on $C$.  The result now follows
  because, in view of Lemma~\ref{lem:11},
  $\mathbb{E}[\norm{\xi}_{3,C}^2|\mathcal{F}_t]$, $t \in [0,1]$, is a
  martingale and thus has bounded paths and because
  \begin{equation}
    \label{eq:35}
    \mathbb{E}[\int_0^1 (\abs{D_t \Sigma_0}^2 +
    \abs{D_t\psi}^{2}) dt] < \infty
  \end{equation}
  as $\Sigma_0,\psi\in \mathbf{D}^{1,2}$.
\end{proof}
  
To conclude the proof of Theorem~\ref{th:3} it only remains to verify
the growth condition~\eqref{eq:28} of Lemma~\ref{lem:5}. This is
accomplished in

\begin{Lemma}
  \label{lem:14}
  Under the conditions of Theorem~\ref{th:3} for every $b>0$
  \begin{equation}
    \label{eq:36}
    \int_0^1 \sup_{(v,x,q)\in C(b)}  \frac{1}{(\frac{\partial r}{\partial x}(v,x))^2}\left|
      v^m\frac{\partial H_t}{\partial v^m}(v,x,q)\right|^2 dt < \infty,
    \quad m=1,\dots,M,
  \end{equation}
  where
  \begin{displaymath}
    C(b) \set \descr{(v,x,q)\in \mathbf{A}}{\abs{q}\leq b}. 
  \end{displaymath}
\end{Lemma}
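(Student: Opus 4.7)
The plan is to verify \eqref{eq:36} directly using the Clark--Ocone representation of $H$ from Lemma~\ref{lem:13}, together with Lemma~\ref{lem:8} and the exponential growth estimate \eqref{eq:4} to disentangle the estimator from the parameters $(v,x,q)$. From \eqref{eq:34} applied with the multi-index selecting $\partial/\partial v^m$,
\[
v^m\frac{\partial H_t}{\partial v^m}(v,x,q) = \mathbb{E}\left[\left(v^m\frac{\partial^2 r}{\partial v^m \partial x}(v,\Sigma(x,q))\right)\bigl(D_t\Sigma_0 + \langle q, D_t\psi\rangle\bigr)\,\Big|\,\mathcal{F}_t\right].
\]
Since $(v,x,q)$ is deterministic, one may pull $1/(\partial r/\partial x(v,x))$ inside the conditional expectation. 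Lemma~\ref{lem:8}, applicable for $|\beta|=2$ under $u_m\in \mathbf{C}^3$, provides a constant $b_1>0$ with $|v^m \partial^2 r/(\partial v^m \partial x)(v,y)|\leq b_1\, \partial r/\partial x(v,y)$ for all $(v,y)$, leaving only the ratio $\partial r/\partial x(v,\Sigma(x,q))/\partial r/\partial x(v,x)$ to bound uniformly on $C(b)$.

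I would then apply \eqref{eq:4} with $y=\Sigma_0+\langle q,\psi\rangle$; since $y^-\leq \Sigma_0^-+b|\psi|$ when $|q|\leq b$, this yields
\[
\frac{\partial r/\partial x(v,\Sigma(x,q))}{\partial r/\partial x(v,x)} \leq e^{c(\Sigma_0^-+b|\psi|)/M},
\]
a bound independent of $(v,x,q)$. Setting $\eta := b_1 e^{c(\Sigma_0^-+b|\psi|)/M}$ and $Y_t := |D_t\Sigma_0|+b|D_t\psi|$, the previous two displays combine to give
\[
\sup_{(v,x,q)\in C(b)} \left|\frac{v^m \partial H_t/\partial v^m(v,x,q)}{\partial r/\partial x(v,x)}\right| \leq \mathbb{E}[\eta\, Y_t \mid \mathcal{F}_t].
\]

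To conclude, apply conditional Cauchy--Schwarz and integrate:
\[
\int_0^1 \bigl(\mathbb{E}[\eta Y_t\mid\mathcal{F}_t]\bigr)^2\,dt \leq \Bigl(\sup_{t\in[0,1]}\mathbb{E}[\eta^2\mid\mathcal{F}_t]\Bigr)\int_0^1 \mathbb{E}[Y_t^2\mid\mathcal{F}_t]\,dt.
\]
The first factor is a.s.\ finite as the pathwise supremum of a continuous martingale on $[0,1]$ whose terminal value $\eta^2$ is integrable by \eqref{eq:17}. The second integral has expectation at most $2\,\mathbb{E}[\int_0^1 (|D_t\Sigma_0|^2+b^2|D_t\psi|^2)\,dt]<\infty$ since $\Sigma_0,\psi\in\mathbf{D}^{1,2}$, hence is a.s.\ finite. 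The product is therefore a.s.\ finite, which gives \eqref{eq:36}.

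The substantive step is the transfer from the random argument $\Sigma(x,q)$ to the deterministic $x$ for the normalizing factor $\partial r/\partial x$: Lemma~\ref{lem:8} supplies a pointwise bound at the random argument, and \eqref{eq:4} then shifts it multiplicatively back to the deterministic point so that the denominator cancels cleanly. This decoupling yields a single $\mathcal{F}_1$-measurable envelope $\eta$ whose exponential moments, controlled by \eqref{eq:17}, render all parameter-free integrations routine.
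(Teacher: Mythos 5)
Your proof is correct and takes essentially the same route as the paper's: the Clark--Ocone representation of $\partial H_t/\partial v^m$ from Lemma~\ref{lem:13}, the pointwise bound $\abs{v^m\,\partial^2 r/(\partial v^m\partial x)}\leq b_1\,\partial r/\partial x$ (the paper cites Theorem~4.2 of \cite{BankKram:11a} directly where you invoke Lemma~\ref{lem:8}, which amounts to the same estimate), the multiplicative shift from $\Sigma(x,q)$ back to $x$ via~\eqref{eq:4}, and conditional Cauchy--Schwarz combined with~\eqref{eq:17} and~\eqref{eq:35}. No gaps.
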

\begin{proof}
  For $(v,x,q) \in \mathbf{A}$ we already know from Lemma~\ref{lem:13}
  that
  \begin{displaymath} \frac{\partial H_t}{\partial v_m}(v,x,q)
    =\mathbb{E}[\frac{\partial^2 r}{\partial x\partial
      v_m}(v,\Sigma(x,q))(D_t \Sigma_0 +
    \ip{q}{D_t\psi})|\mathcal{F}_t].
  \end{displaymath} 
  According to Theorem~4.2 in \cite{BankKram:11a} there is a constant
  $b_1>0$ such that for all $(v,x)\in (0,\infty)^M\times \mathbf{R}$
  and $m=1,\dots,M$
  \begin{displaymath}
    \frac1{b_1} \frac{\partial r}{\partial x}(v,x) \leq v^m
    \frac{\partial^2 r}{\partial v^m \partial x}(v,x) \leq b_1
    \frac{\partial r}{\partial x}(v,x). 
  \end{displaymath}
  Accounting for~\eqref{eq:4} we deduce that for $(v,x,q) \in C(b)$
  \begin{displaymath}
    0 < v^m \frac{\partial^2 r}{\partial
      v^m \partial x}(v,\Sigma(x,q)) \leq b_1 \frac{\partial
      r}{\partial x}(v,\Sigma(x,q)) \leq b_1 \frac{\partial
      r}{\partial x}(v,x)
    e^{c(\Sigma_0^- + b\abs{\psi})/M}.
  \end{displaymath}
  It follows that
  \begin{align*}
    \abs{\frac{v^m}{\frac{\partial r}{\partial x}(v,x)} \frac{\partial
        H_t}{\partial v^m}(v,x,q)}^2 &\leq b_1^2
    \left(\mathbb{E}[e^{c(\Sigma_0^- + b\abs{\psi})/M}(\abs{D_t
        \Sigma_0} +
      b\abs{D_t\psi})|\mathcal{F}_t]\right)^2 \\
    &\leq b_1^2 \mathbb{E}[e^{2c(\Sigma_0^- +
      b\abs{\psi})/M}|\mathcal{F}_t] \; \mathbb{E}[(\abs{D_t \Sigma_0}
    + b\abs{D_t\psi})^2|\mathcal{F}_t],
  \end{align*}
  which proves~\eqref{eq:36} because, in view of~\eqref{eq:17}, the
  martingale $\mathbb{E}[e^{2c(\Sigma_0^- +
    b\abs{\psi})/M}|\mathcal{F}_t]$, $t \in [0,1]$ has bounded paths
  and because of~\eqref{eq:35}.
\end{proof}

The proof of Theorem~\ref{th:3} is completed.

\bibliographystyle{plainnat} \bibliography{../bib/finance}

\end{document}